\def\ps@pprintTitle{%
 \let\@oddhead\@empty
 \let\@evenhead\@empty
 \def\@oddfoot{}%
 \let\@evenfoot\@oddfoot}
\newtheorem{myDef}{Definition}
\newtheorem{myCla}{Claim}
\newtheorem{myProp}{Proposition}
\newtheorem{myAsum}{Assumption}
\newtheorem{myHeu}{Heuristic}
\newtheorem{myPro}{Property}
\newcommand\blfootnote[1]{%
  \begingroup
  \renewcommand\thefootnote{}\footnote{#1}%
  \addtocounter{footnote}{-1}%
  \endgroup
}
\newtheorem{myEx}{Example}
\journal{Transportation Research Part B}
\begin{document}
\singlespacing
\begin{frontmatter}

\title{Stochastic on-time arrival problem in transit networks}
\blfootnote{Accepted manuscript for Transportation Research Part B, https://doi.org/10.1016/j.trb.2018.11.013}
\blfootnote{\textcopyright2018. This manuscript version is made available under the CC-BY-NC-ND 4.0 license http://creativecommons.org/licenses/by-nc-nd/4.0/}

\author[label1]{Yang Liu\corref{cor1}}
\address[label1]{School of Civil and Environmental
Engineering, Cornell University, Ithaca, NY 14853}

\cortext[cor1]{Corresponding author}
\ead{yl2464@cornell.edu}

\author[label2]{Sebastien Blandin}
\address[label2]{IBM Research, Singapore \fnref{label4}}
\ead{sblandin@sg.ibm.com}

\author[label1]{Samitha Samaranayake}
\ead{samitha@cornell.edu}

%\address[label2]{Address Two\fnref{label4}}

%\fntext[label3]{I also want to inform about\ldots}
%\fntext[label4]{Small city}

%\ead{author.one@mail.com}
%\ead[url]{author-one-homepage.com}
%\address[label5]{Some University}
%\ead{author.two@mail.com}

%\author[label1,label5]{Author Three}
%\ead{author.three@mail.com}

\begin{abstract}
This article considers the stochastic on-time arrival problem in transit networks where both the travel time and the waiting time for transit services are stochastic. A specific challenge of this problem is the combinatorial solution space due to the unknown ordering of transit line arrivals. We propose a network structure appropriate to the online decision-making of a passenger, including boarding, waiting and transferring. In this framework, we design a dynamic programming algorithm that is pseudo-polynomial in the number of transit stations and travel time budget, and exponential in the number of transit lines at a station, which is a small number in practice. To reduce the search space, we propose a definition of transit line dominance, and techniques to identify dominance, which decrease the computation time by up to $90\%$ in numerical experiments. Extensive numerical experiments are conducted on both a synthetic network and the Chicago transit network. 
\end{abstract}

\begin{keyword}
%% keywords here, in the form: keyword \sep keyword
Stochastic routing \sep Dynamic programming \sep Optimal policy \sep Dominance condition
%% MSC codes here, in the form: \MSC code \sep code
%% or \MSC[2008] code \sep code (2000 is the default)
\end{keyword}

\end{frontmatter}

%%
%% Start line numbering here if you want
%%
% \linenumbers

%% main text
\section{Introduction}\label{sec:intro}
People navigate over 1 billion kilometers a day using mobile routing services~\citep{recode}, and many of these services provide travelers with information on transit networks. In most of such applications, given an origin-destination (OD) pair and a desired departure or arrival time, the route associated with the minimum expected trip time is provided to the user. However, the uncertainty associated with these recommendations, either due to variability in travel time or the transit service headway, is rarely accounted for. In contrast, a number of surveys and numerical studies have highlighted that the degree of risk aversion of transit passengers highly affects their route choices~\citep{szeto2011reliability}, even more so than in the road networks. In this work, we attempt to bridge this gap by formulating and solving the Stochastic On-time Arrival (SOTA) problem for transit networks, which provides a transit routing policy that maximizes the probability of reaching the destination within a given travel time budget in stochastic transit networks. \par

A vast majority of previous studies on routing problems in stochastic transportation networks aim to identify the route or adaptive policy with the least expected travel time (LET) \citep{loui1983optimal,hall1986fastest, polychronopoulos1996stochastic,miller2000least,waller2002online,fan2005shortest,gao2006optimal,huang2012optimal, yang2014constraint, chen2014reliable}. While the expected travel time is a natural optimality metric, there exists a variety of situations where it is not sufficient, and tail statistics must be considered; for instance, travelers who want to catch a flight are more concerned with arriving on time with a high probability rather than with minimizing their expected travel time \citep{yang2017optimizing}. To account for these contexts, other formulations that consider a reliable optimal path have been proposed, starting with~\cite{frank1969shortest}. In this formulation, the goal is to find the path that maximizes the probability of realizing a travel time smaller than some desired time budget. This definition is subsequently extended to the online context in~\cite{fan2005arriving}, and is commonly referred to as the Stochastic On-Time Arrival (SOTA) problem. In the SOTA problem, the weight of each network link is a random variable with a known probability density function that represents the travel time of the link. The solution to the SOTA problem is a routing policy that maximizes the probability of arriving at the destination within a specified time budget. Here the routing policy is an adaptive solution that determines the routing decision at each node based on the realization of the travel time experienced en-route up to that point. \par

Given the complexity of the SOTA problem, significant efforts have been made to design efficient solution algorithms. \cite{fan2006optimal} formulate it as a dynamic programming problem and use a standard successive approximation (SA) procedure. This approach, however, has no finite bound on the maximum number of iterations needed for convergence in networks with loops. In~\cite{samaranayake2012speedup,samaranayake2012tractable}, the authors propose a label-setting algorithm to resolve this issue, exploiting the fact that there is always a non-zero minimum realizable travel time on each link in road networks. To further reduce the computation time, \cite{sabran2014precomputation} modify two deterministic shortest path preprocessing techniques: reach~\citep{gutman2004reach} and arc-flags~\citep{bauer2009sharc, hilger2009fast}, and apply them to the SOTA problem. Other studies \citep{nikolova2006stochastic, nie2009shortest, parmentier2014stochastic, niknami2016tractable} explore computationally efficient solution strategies for providing reliability guarantees in stochastic shortest path problems, where a fixed route (as opposed to a policy) is desired. \par

In transit networks, the problem of determining a set of reliable travel decisions is significantly more complex than in road networks. In this context, the traveler is not in control of the transit line(s) that they may travel on to reach the destination, and may have to make a number of complex decisions regarding which transit line(s) to take. This complexity arises from the following characteristics of transit systems.  

% In this context, a passenger needs to choose between multiple transit lines sharing segments of routes, which is called the \textit{common line problem} (\cite{chriqui1975common}). The routing decisions include boarding, waiting and transferring. The difficulty of formulating and solving the routing problem increases due to the following two properties: \par

\begin{itemize}
\item \textbf{Uncertainty of arrival times/headways:} While the uncertainty in road networks is limited to the travel time, in transit networks one also needs to consider the uncertainty of arrival times and headways. Whether to take a transit line that arrives at a station or wait for a potentially faster service that is yet to arrive depends on the probabilistic trade-off between the extra waiting time and potential travel time savings. Therefore, computing the solution also requires modeling and solving for the headway distributions. 

\item \textbf{Combinatorial choice set:} 
% Due to the characteristics described above, in general, it may not be possible to determine the optimal decision at the point of entering a transit station or at the first feasible transit line arrival. 
In road networks, it never pays off to idle at a node and delay the departure from the node in hopes of improving the probability of arriving at the destination on time. However, in transit networks, it may be advantageous to not take the first transit line that arrives at the station (that can get you to your destination) and wait for a better option (e.g. wait for an express train or a more direct bus line). Therefore, a passenger needs to choose between multiple transit lines sharing segments of routes, and the decision regarding which transit line to board at a station depends on the unknown future arrival order of the candidate transit lines, which is an exponentially large set in the number of candidate (feasible) lines at each station, and leads to a combinatorial choice set.

% Therefore, while waiting at a station, the traveler does not need to make a boarding decision until a vehicle is about to depart (\cite{hickman1997transit}), and the passenger may make multiple routing decisions if there are multiple transit lines at the station. \par

% \item \textbf{Online setting:} Given the non-FIFO nature of the problem, the time waited so far at the station and the estimated remaining waiting time for candidate transit lines highly affect the passenger's optimal decision. Therefore, the passenger should apply the new information from transit line arrivals to make boarding decisions. This is no longer hard to do thanks to the popularity of smart mobile devices. \par 
\end{itemize}

Previous research on transit routing problems typically simplifies the problem by assuming that passengers board the first arriving transit service from an attractive line set that is precomputed to minimize the expected total travel time \citep{spiess1989optimal,cominetti2001common, nonner2014shortest, li2015finding}. Under this assumption, the passenger is committed to a transit line set. However, the traveler does not need to make a boarding decision until a transit vehicle is about to depart (\cite{hickman1997transit}), and it can be meaningful to adapt decisions based on information learned (uncertainties that are realized) during the trip. Fortunately, the role of online information has been noted recently in multiple studies. \cite{gentile2005route} propose a frequency-based assignment model under the assumption that the arrival time of the next transit vehicle is available once the passenger arrives at a transit station and demonstrate the potential benefits of utilizing such information. Instead of assuming full information, \cite{chen2015optimal} propose a routing strategy in a transit network with partial online information at the stations. The partial online information represents that the arrival time of the incoming transit vehicles is available only for a subset of the candidate transit lines. \cite{oliker2018frequency} develop a transit assignment model which considers two types of available information (partial information and full information), and demonstrate the impact of online information on assignment results. In our work, we consider the routing problem in transit networks as a fully online problem. The online information setting mentioned above can be easily adopted into our framework by changing the travel time and headway distributions accordingly. \par

The aim of this article is to formulate the SOTA problem for transit networks and develop efficient algorithms to solve it in this setting. The specific contributions of the work include: \par

\begin{itemize}
\item The formulation of the SOTA problem for transit networks, including a general network structure for stochastic transit networks and a decision-making model. Both the waiting time for each transit line and the travel time on each link are assumed to be random variables with known probability density functions. The solution is an adaptive policy that fully considers the transit specific characteristics of the problem mentioned above. \par

\item The design of a dynamic programming (DP) algorithm, which is pseudo-polynomial in the number of transit stations and time budget, and exponential in the number of transit lines at each station, which is practically a small number. To reduce the search space, we develop a definition of transit line dominance and present methods to identify this transit line dominance, which significantly decreases the computation time in our numerical experiments. \par

\item Extensive experiments in a synthetic network and in the Chicago transit network, which show the potential for solving this problem in a real-time route planning application setting. We also propose a general procedure to generate travel time, headway, waiting time distributions from General Transit Feed Specification (GTFS) data. \par
\end{itemize}

The rest of this article is organized as follows. In Section~\ref{sec:formulation}, we formulate the mathematical problem, in particular, we describe the network model and the underlying decision-making framework. In Section~\ref{sec: solve}, we introduce a dynamic programming based approach to solve the problem, as well as the complexity analysis of the algorithm. In Section~\ref{sec: space}, we provide a series of algorithmic techniques to reduce the search space. Section~\ref{sec:experiments} consists of numerical results on the computational performance and practical efficiency of the model and algorithms introduced in this work, both in a synthetic network and in the Chicago transit network. Finally, in Section~\ref{sec:conclusion}, we provide closing remarks and discuss possible directions for future research. \par

\section{Problem formulation.}\label{sec:formulation}
\subsection{Problem description}
In this section, we extend the SOTA problem definition to the context of transit networks. In this setting, as mentioned previously, two types of random variables need to be considered: the travel time between two transit stations and the waiting time for each transit line arrival at each transit station. The objective of the SOTA problem for transit networks is to find the routing (boarding, alighting and transferring) policy that maximizes the probability of arriving at the destination within a time budget. For conciseness, in the remainder of the article, we use \emph{utility} to represent \emph{the probability of arriving at the destination within the remaining time budget}, but note that the framework can be extended to other functions, and to robust settings \citep{flajolet2017robust}. The routing policy here includes the boarding, alighting and transferring decisions at each transit station in different situations (time already spent waiting at a station, arrival order of transit services, etc). \\

\noindent The following modeling assumptions are made. 
\begin{myAsum}
Passengers' arrivals are independent of the transit schedule.
\end{myAsum}
This assumption models a situation where the transit service is frequency-based (not schedule-based) and passengers do not adjust their specific departure times based on the transit schedule. This is a standard assumption made in the literature \citep{gentile2005route,li2015finding}. In modern services, transit vehicles' positions may be published in real-time (even for frequency-based services), and thus passenger arrivals may be correlated with the bus schedule. The formulation can be adapted to account for this by changing the travel time distribution and waiting time distribution accordingly. \par

\begin{myAsum}\label{asum: once}
Only the first arrival from each transit line (after the passenger enters the station) is considered as a candidate in the choice set. 
\end{myAsum}
This assumption is made for both modeling and algorithmic reasons:
\begin{itemize}
\item Guiding the passenger to ignore the first arrival from a particular line, but board the second arrival from that same transit line makes the routing direction confusing; 
\item Without this assumption, it is possible for the passenger to have to make an infinite number of decisions in the theoretical worst case (albeit with vanishing probability).
\end{itemize}

The passenger may not be able to board certain transit services (especially during rush hour or after a major event), if the transit service is full and has no remaining capacity. However, since we do not have data about the occupancy of each transit vehicle, the capacity of each transit service is not considered in our study. Note that this is not a disadvantage of our framework because given the data for the occupancy of transit vehicles, we should be able to define the waiting time to be the waiting time for the first arrival transit service that the passenger can board, and our framework can then be adapted to use these distributions without any changes.

\begin{myAsum}\label{asum: nosame}
No two transit services can arrive at the transit station at the same time. 
\end{myAsum}
Without loss of generality, two events never occur at exactly the same time in the continuous setting. In the subsequent discretized form of the problem, this assumption translates to no two transit services arriving at a transit station during the same discretized time interval\footnote{While the optimality of the solution relies on this assumption, if two lines arrive in the same time interval in practice, the algorithm can closely approximate the solution by considering the two cases of one arriving before the other.}. 

\subsection{Transit network representation} \label{network_structure}
We consider a directed graph $G(V, E)$, in which $V$ is the set of nodes and $E$ is the set of links. We model the transit stations with three types of nodes:
\begin{itemize}
\item \textbf{Station nodes:} A station node denoted by $S_y^X$ represents a passenger waiting at station $y$ for the set of candidate transit lines $X$.
\item \textbf{Arrival nodes:} An arrival node denoted by $A_y^{i,X}$ represents transit line $i$ arriving first at station $y$ among all the candidate lines in $X \cup \{i\}$, and transit lines $j \in X$ having not arrived yet, since the passenger arrives at the station.
\item \textbf{Line nodes:} A line node denoted by $L_y^i$ represents the passenger boarding transit line $i$ at station $y$.
\end{itemize}

\begin{figure}[htb]
\centering
\includegraphics[width=0.8\textwidth]{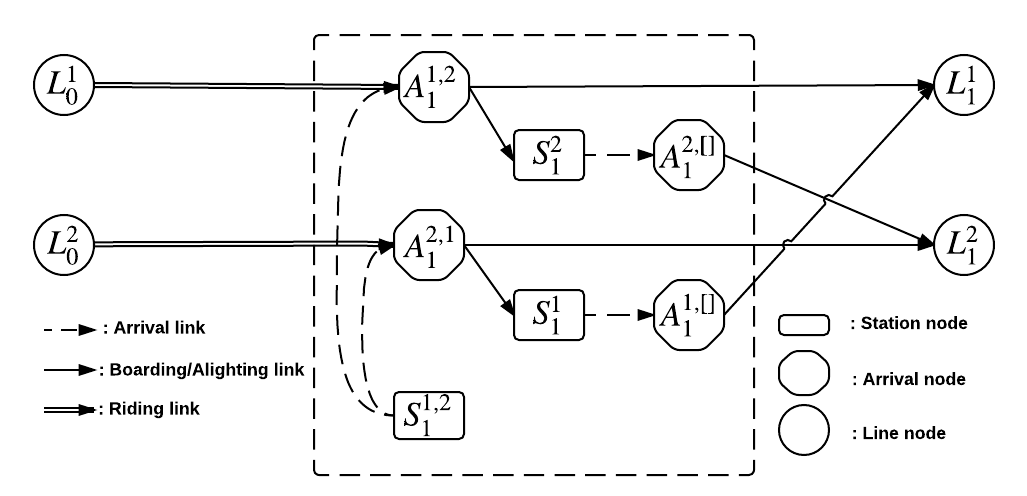}
\caption{\label{fig:network} Network representation of a transit station served by two transit lines. All the links and nodes within the dashed line rectangle are used to model the decision-making at a single physical station. }
\end{figure}

The set of station nodes, arrival nodes and line nodes are $V_S$, $V_A$ and $V_L$ respectively. Without loss of generality, the OD is selected from the station nodes to model a passenger starting their trip from a station node in the waiting state. Give a passenger in a waiting state at a station node with $m$ candidate transit lines, it is possible for any of the $m$ lines to arrive first. In each of these situations, the passenger either boards the transit line that arrives or continues waiting for other candidate lines. Passengers make decisions that maximize their utility. \\

We categorize links as follows: 
\begin{itemize}
\item \textbf{Arrival links:} Links from station nodes to arrival nodes.
\item \textbf{Riding links:} Links from line nodes to arrival nodes.
\item \textbf{Boarding links:} Links from arrival nodes to line nodes.
\item \textbf{Alighting links:} Links from arrival nodes to station nodes.
\end{itemize}

Figure~\ref{fig:network} illustrates the network representation of the decision-making process at a transit station with two transit lines. A passenger starting the trip physically at station 1 starts the trip at station node $S_1^{1,2}$ in the model. If transit line $1$ arrives first, the passenger moves to the corresponding arrival node $A_1^{1,2}$. The passenger has to decide to either board this transit line or continue to wait for line $2$. If the passenger continues waiting, the passenger moves to station node $S_1^2$. Otherwise, the passenger moves to line node $L_1^1$. All the links and nodes within the dashed line rectangle are used to model the decision-making at a single physical station. This network model allows for transfers. For instance, a passenger having boarded line 1 at station $S_0$ (the station preceding station $S_1$) is able to get to station node $S_1^2$ and wait for the next arrival of line $2$. \par

The link costs in the network are defined as follows. A riding link $(i,j)$ is associated with a travel time distribution $p_{i, j}(\cdot)$. A boarding link has no cost, since it refers to an instantaneous event. An arrival link is associated with a waiting time distribution $w_{y}^j(\theta,r)$, characterizing the probability density for the waiting time being $\theta$ for transit line $j$ at station $y$, given that the passenger has already waited at the station for $r$ units of time. Note that $\theta$ is the waiting time on top of $r$, and models transit line $j$ arriving at station $y$ after a total waiting time of $(r+\theta)$ since the passenger arrives at the station. Therefore, $w_{y}^j (\theta, 0)$ is the original waiting time distribution when the passenger arrives at the station. Research has shown that empirical headway data fit better with Loglogistic, Gamma and Erlang distributions than the exponential distribution \citep{li2015finding}. Therefore, we do not assume that the waiting time distribution is memoryless (as some other studies do), and use the following rule to normalize the waiting time distribution given $r$:

\begin{equation}\label{normalize}
w_{y}^j (\theta,r)=\frac{w_{y}^j (r+\theta, 0)}{1-\int_0^{r} w_{y}^j (\alpha, 0) d\alpha}\quad 0\leq \theta\leq T-r, 0< r\leq T
\end{equation}
with $T$ being the total time budget when the passenger begins the trip. \par

In Section~\ref{SOTAtransit}, we discretize the time-space to solve the problem numerically. In the discretized space, we force the waiting time to be at least one unit of the discretized time interval. Therefore, the waiting time equation reads as follows. 

\begin{equation}
w_{y}^j (\theta,r)=\frac{w_{y}^j (r+\theta, 0)}{1-\sum_0^{r} w_{y}^j (\alpha, 0) d\alpha}\quad 1\leq \theta\leq T-r, 0< r\leq T
\end{equation}

Although the waiting time distribution is a 2-D array, as explained in Section 3.2, we can save computation time in practice by precomputing and storing $1-\int_0^{r} w_{y}^j (\alpha, 0) d\alpha$ for each $r$ after we discretize the time-space. \par

\subsection{The SOTA problem for transit networks}\label{SOTAtransit} 
In this section, we describe how to compute the utility functions of the passenger at the different types of nodes in the model. The utility at a node $i$ is a function of the remaining time budget $t$ when the passenger arrives at the node, denoted by $u_i(t)$. The utility function at the destination node $D$ is:
    $$u_D (t)=1 \quad 0\leq t\leq T$$ since the passenger has completed the trip when at node $D$. 
\subsubsection{Line nodes}
Recall from Figure~\ref{fig:network} that a line node only contains an outgoing edge to an arrival node. Assume that the line node we are considering is $i$ and the following arrival node is $j$. The travel time on link $(i, j)$ is a random variable $\theta$, and the remaining time budget at node $j$ is $t - \theta$. Therefore, the utility at line node $i$ is a function of the remaining time budget $t$ when the passenger arrives at the node, denoted by $u_i(t)$. 
\begin{myDef}\label{def:line_node}
    The utility function at a line node $i$ can be computed as follows.
    $$u_i (t)= \mathop{\mathbb{E}}_{\theta} (u_{j}(t - \theta)) = \int_0^t p_{i, j}(\theta)\cdot u_j (t-\theta) d\theta, \forall i\in V_L, (i,j)\in E, 0\leq t\leq T$$ where $j\in V_A$ is the subsequent arrival node of line node $i$.
\end{myDef}
\noindent This utility function is analogous to the standard utility function of the SOTA problem for road networks.

\subsubsection{Arrival nodes}
All the routing decisions occur at arrival nodes, since the passenger is faced with the decision of either boarding the transit line that arrives first (selecting the subsequent line node) or continuing to wait for the other candidate transit lines (selecting the corresponding station node). Let $A_{y}^{i, X}$ be the arrival node of interest, and $u_{A_{y}^{i, X}} (t, r)$ be the utility when the passenger has a remaining time budget $t$ and has already waited at the station for $r$ units of time. We call the tuple $(t, r)$ the passenger state. Since the passengers aim to maximize the utility, the utility at an arrival node is the maximum of the utilities at the subsequent line node and station nodes.
% \footnote{Note that the \textit{state} in this paper is not the same concept as in DP literatures.}. 

\begin{myDef}\label{def:arrivalnode}
The utility function at an arrival node $A_y^{i,X}$ with passenger state $(t, r)$ is:
$$u_{A_y^{i,X}} (t,r)=\max_{ j\in V_L \mid (A_y^{i,X},j)\in E;\  S_y^X\in V_S \mid (A_y^{i,X},S_y^X)\in E} \{u_j (t),u_{S_y^X} (t,r)\}$$
\end{myDef}

\subsubsection{Station nodes}\label{station_nodes}
Station nodes are followed only by arrival nodes, and the utility at a station node is the expectation of the utility at the following arrival nodes. An arrival node is defined by the first arriving transit line and the corresponding passenger state. Let $(i, \theta)$ be a random event which represents that transit line $i$ is the first arriving transit line, and the waiting time for it is $\theta$. Assume that $S_y^{X}$ is the station node of interest, and that we want to compute $u_{S_y^X} (t,r)$. For each $0\leq \theta\leq t$, the probability of the transit line $i$ arriving the first after waiting $\theta$ units of time is $w_{y}^i(\theta,r)\cdot \prod\limits_{j\in X\backslash i}(1-\int_0^\theta w_{y}^j(\alpha,r)~d \alpha)$. In this case, the passenger has waited $\theta+r$ units of time in total, and the utility at this arrival node is $u_{A_y^{i,X\backslash i}} (t-\theta,\theta+r)$.
\begin{myDef}
    The utility function at a station node is:
    \begin{equation*}
        \begin{split}
        u_{S_y^X} (t,r) &= \mathop{\mathbb{E}}_{(i, \theta)} (u_{A_y^{i,X\backslash i}} (t-\theta,\theta+r)) \\
         &= \sum_{i\in X} \left(\int_0^t w_{y}^i(\theta,r)\cdot\prod_{j\in X\backslash i}(1-\int_0^\theta w_{y}^j(\alpha,r)d\alpha) \cdot u_{A_y^{i,X\backslash i}} (t-\theta,\theta+r)d\theta\right)
        \end{split}
    \end{equation*}
\end{myDef}
\noindent  This utility is the weighted average of the utility at the corresponding arrival nodes. \par

The integrals in the above equations do not have closed form expressions and cannot be solved analytically. Therefore, they need to be integrated numerically by discretizing the time horizon into small intervals. In the discretized model, we assume that no two transit lines can arrive at the same transit station at the same discretized time interval. As a consequence, the solution not guaranteed to be optimal, since there is a small (non-zero) probability that two lines might arrive at the same time interval regardless of how small the discretization is. However, we can set the length of time intervals to be a small number in practice\footnote{In the experiments, we set the time interval length to be 15 seconds. Using this time interval length, the average probability of multiple buses arriving at the same time interval for each station is about 0.6\% in the Chicago transit network we use in numerical experiments. For a particular OD, this probability is usually lower since the candidate bus line set is a subset.}. \par 

Given a fixed time discretization, the discrete form of the utility function at station node reads as follows. 
\begin{myDef}\label{def:stationnode}
    The discrete form of the utility function at station nodes is:
    $$u_{S_y^X} (t,r)=\sum_{i\in X} \left(\sum_{\theta=1}^t w_{y}^i(\theta,r)\prod_{j\in X\backslash i}(1-\sum_{\alpha=0}^\theta w_{y}^j(\alpha,r)) \cdot u_{A_y^{i,X\backslash i}} (t-\theta, \theta+r)\right)$$.
\end{myDef}
\noindent The utility function for other types of nodes can be discretized similarly. \par

\begin{myEx}
Here we show a simple example that shows the sophisticated decision-making process that the model allows. Assume that the passenger is waiting at a station with three candidate transit lines. Table~\ref{tab: ex1-distribution} is the waiting time distribution for each transit line, and the utility of taking the transit line corresponding to the waiting time. Figure~\ref{boarding} shows the optimal decision based on the arrival order. We can observe that: (1) The transit line that first arrives is not always the optimal choice. When transit line 3 arrives at time 2, it is the first transit service that arrives. However, it is optimal for the passenger to not board, and continue to wait; (2) Even when it is optimal to not board a transit line at an arrival event, the event may impact the future optimal decisions, as illustrated by the differing policies following the arrival or non-arrival of transit line 3 at time 2.
\end{myEx}

\begin{table}[H]
\centering
\caption{The waiting time and utility distribution for example 1. }
\label{tab: ex1-distribution}
\resizebox{.55\textwidth}{!}{
\begin{tabular}{c|c|c|c}
\hline
Transit line ID    & Waiting time & Probability & Utility \\ \hline
\multirow{3}{*}{1} & 1            & 0.05        & 0.90    \\ \cline{2-4} 
                   & 3            & 0.05        & 0.80    \\ \cline{2-4} 
                   & 10           & 0.90        & 0.00    \\ \hline
\multirow{2}{*}{2} & 5            & 0.90        & 0.85    \\ \cline{2-4} 
                   & 15           & 0.10        & 0.00    \\ \hline
\multirow{2}{*}{3} & 2            & 0.50        & 0.70    \\ \cline{2-4} 
                   & 6            & 0.50        & 0.60    \\ \hline
\end{tabular}
}
\end{table}

\begin{figure}[H]
\centering
\includegraphics[width=0.9\textwidth]{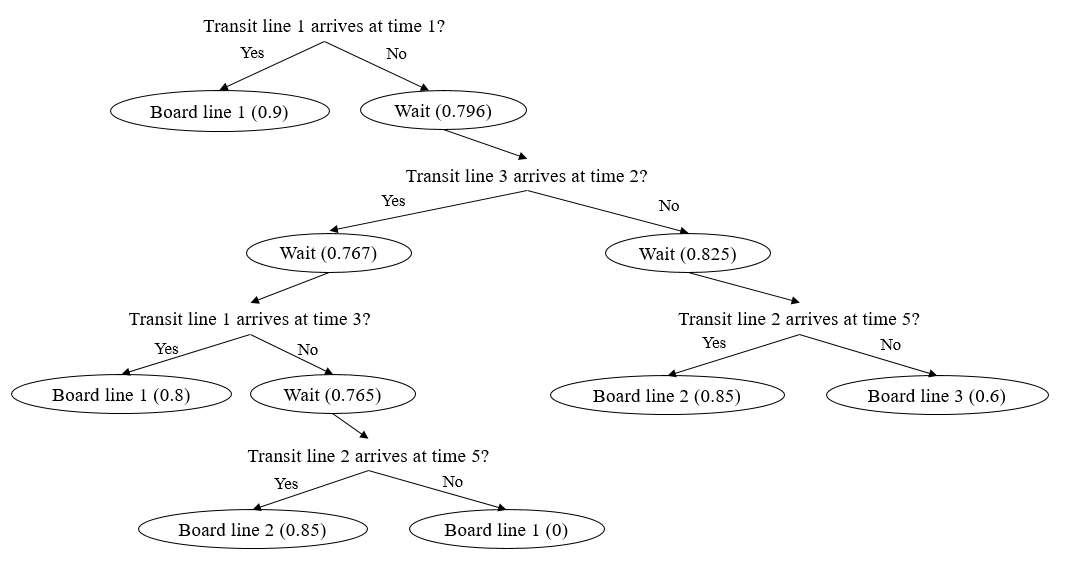}
\caption{\label{boarding} The optimal boarding decision corresponding to the transit line arriving order for example 1. The number in each decision node (board and wait) represents the decision's utility. Note that arrival events for which boarding is never optimal are ignored in the figure (e.g. the arrival of transit line 1 at time 3 following the non-arrival of transit line 3 at time 2).}
\end{figure}

\section{Solving the SOTA problem for transit networks}\label{sec: solve}
In this section, we describe how to solve the discrete time version of the SOTA problem for transit networks. We first show how the problem can still (even in the transit setting) be solved using a dynamic programming approach, then present a complexity analysis of the algorithm, and finally in Section~\ref{sec: space} describe a number of search space pruning methods designed to make the problem tractable in practice.

\subsection{Dynamic programming approach}
In \cite{samaranayake2012tractable}, a label-setting algorithm for the SOTA problem is developed by exploiting the fact that each link in a road network has a positive minimum realizable travel time. This fact guarantees that there will be no loops in the network with zero travel time, and thus allows for a dynamic programming approach for solving the problem. However, in the transit network representation proposed in the previous section, there are links with zero minimum realizable travel time. 

\begin{myCla} \label{c1}
In the transit network representation introduced in Section~\ref{sec:formulation}, there is no loop with a zero realizable travel time.
\end{myCla}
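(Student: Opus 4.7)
The plan is to enumerate the four link types from Section~\ref{network_structure} and identify which ones can have zero realizable travel time. The arrival links carry a waiting time that in the discretized model is bounded below by one time unit (this is forced in Section~\ref{SOTAtransit}, just before Definition~\ref{def:stationnode}), and the riding links carry a travel time distribution on a physical inter-station segment, whose realizations are strictly positive. The only edges with zero realizable cost are therefore the boarding links and the alighting links.

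Next I would classify the nodes by the cost of their outgoing edges. A station node $S_y^X$ has only arrival links leaving it, so every edge out of a station node has strictly positive realizable cost. A line node $L_y^i$ has only a riding link leaving it, so every edge out of a line node has strictly positive realizable cost. The only node type whose outgoing edges may realize cost zero is the arrival node $A_y^{i,X}$, whose outgoing edges are boarding and alighting links going, respectively, to a line node or to a station node.

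Now I would argue by contradiction. Suppose a cycle $C$ in $G(V,E)$ has zero realizable travel time, i.e.\ every edge of $C$ realizes cost zero. By the classification above, every node of $C$ must have all its outgoing edges in $C$ be zero-cost, which forces every node visited by $C$ to be an arrival node. But an arrival node's outgoing edges land in $V_L \cup V_S$, so the next node in $C$ cannot itself be an arrival node, a contradiction. Hence $C$ must contain at least one station or line node, and the edge of $C$ leaving that node has strictly positive realizable cost, contradicting the assumption that $C$ is a zero-cost cycle.

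The argument is essentially a bipartite-style observation on the node-type transitions, so there is no real obstacle; the only point that requires care is to justify the strict positivity of arrival-link cost in the discretized setting, which is handled by the normalization rule that pins waiting times to at least one discretized interval. I would state that explicitly at the outset of the proof so that the remainder is just a short case analysis on node types.
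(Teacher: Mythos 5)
Your proposal is correct and follows essentially the same reasoning as the paper's proof: only edges leaving arrival nodes can realize zero cost, and since an arrival node's successors are line or station nodes, any cycle must traverse a positively-weighted edge. Your version simply makes the paper's terse case analysis explicit, including the useful observation that arrival links are strictly positive because the discretization forces at least one time unit of waiting.
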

\begin{proof}
Only the links starting from the arrival nodes can have zero minimum realizable time. To form a loop, the passenger has to first get to an arrival node from some other type of node, which is not an arrival node. The links originating from all other types of nodes have a positive minimum realizable travel time, so a zero travel time loop is not possible. 
\end{proof}

Therefore, we can still use a dynamic programming approach to solve the SOTA problem for transit networks. Each sub-problem in the dynamic program can be defined by $(i, t, r)$ where $i$ is the node that we consider, $t$ and $r$ have the same definition as in Section 2. Let $\text{OPT}(i, t, r)$ denote the utility at node $i$ when the passenger has a time budget of $t$ and has waited at the station for $r$ units of time. Note that the $\text{OPT}(i, t, r)$ satisfies the Bellman's Principle of Optimality, i.e., the remaining decisions only depend on the current state and are not related to the past states and decisions. According to the definitions in Section 2, we claim that $\text{OPT}(i, t, r)$ satisfies the following recurrence relation: \par

\begin{equation}\label{equ: recurrence}
\resizebox{\textwidth}{!}{%
$\text{OPT}(i, t, r)=\left\{
\begin{aligned}
& 1 & \quad \text{if } i \text{ is the destination} \\
& 0 & \quad \text{if } t < 0 \\
& \sum\limits_0^t p_{i, j}(\theta)\cdot \text{OPT}(j, t-\theta, 0) & \quad \text{if } i \text{ is a line node and } (i, j) \in E \\
& \max_{ j\in V_L \mid (A_y^{i,X},j)\in E;\ S_y^X\in V_S \mid (A_y^{i,X},S_y^X)\in E} \{\text{OPT}(j, t, 0), \text{OPT}(S_{y}^{X}, t, r)\} & \quad \text{if } i \text{ is arrival node } A_{y}^{i, X} \\
& \sum_{i\in X} \left(\sum_{\theta=1}^t w_{y}^i(\theta,r)\prod_{j\in X\backslash i}(1-\sum_{\alpha=0}^\theta w_{y}^j(\alpha,r)) \cdot \text{OPT}(A_y^{i,X\backslash i}, t-\theta, \theta+r)\right) & \quad \text{if } i \text{ is station node } S_{y}^{X} \\
\end{aligned}
\right.$}
\end{equation}

The initial problem we wish to solve is given by $(O, T, 0)$, i.e., the passenger is at the origin node $O$ with time budget $T$ and has waited for zero units of time. The algorithm first initializes the utility function corresponding to the destination node to 1 and computes all
the normalized waiting time distributions that are needed in the subsequent computation. Then, the utility at each node is updated in a dynamic programming fashion according to the recurrence relation given in Equation~\ref{equ: recurrence}.\par \par

% \begin{algorithm}[H]
% \caption{Dynamic programming approach}\label{alg: dp}
% \begin{algorithmic}[1]
% \Function{DP}{$n,t,r$}\Comment{$n$ is a node}
% \If{$n$ is the destination}
%     \State \textbf{return} $1$
% \EndIf

% \If{$u_n(t,r)$ has been computed}
%     \State \textbf{return} $u_n(t,r)$
% \EndIf

% \State $u_n(t,r)\gets 0$
% \If{$n \in V_A$}\Comment{Arrival node}
%     \For{$(n, n^\prime) \in E$}\Comment{$n^\prime$ is the subsequent node of $n$}
%         \State $u_n(t,r)\gets \max(u_n(t,r), \text{DP}(n^\prime, t, r))$\Comment{Definition~\ref{def:arrivalnode}}
%     \EndFor
    
% \ElsIf{$n \in V_S$}\Comment{Station node}
%     \For{$(n, n^\prime) \in E$}
%         \For{$\theta:=1$ to $t$}
%             \State $u_n(t,r)\gets u_n(t,r) + p_w(n^\prime, \theta, r) \cdot \text{DP}(n^\prime, t - \theta, \theta + r)$\Comment{Definition~\ref{def:stationnode}}
%         \EndFor
%     \EndFor
    
% \ElsIf{$n \in V_L$}\Comment{Line node}
%     \For{$(n, n^\prime) \in E$}
%         \For{$\theta:=1$ to $t$}
%             \State $u_n(t,r)\gets u_n(t,r) + p_{n,n^\prime}(\theta) \cdot \text{DP}(n^\prime, t - \theta, 0)$\Comment{Definition~\ref{def:line_node}}
%         \EndFor
%     \EndFor
% \EndIf
% \State \textbf{return} $u$
% \EndFunction
% \\
% \For {$y \in Y$ and $j\in M_y$}
%     \For {$r:=0$ to $T$}
%         \For {$\theta:=0$ to $T-r$} 
%             \State $w_y^j(\theta,r)\gets \frac{w_y^{j}(r + \theta, 0)}{1 - \sum\limits_{\alpha = 0}^{r}w_y^{j}(\alpha, 0)}$\Comment{Equation~\ref{normalize}}
%         \EndFor
%     \EndFor
% \EndFor
% \State \textbf{output} $\text{DP}(O, T, 0)$
% \end{algorithmic}
% \end{algorithm}

\subsection{Complexity analysis} 
The runtime of the algorithm is pseudo-polynomial in the number of stations in the transit network and time budget, and exponential in the number of transit lines at any station. We first provide the number of nodes of all three types. Then, we analyze the time complexity for computing the utility on each type of node. 

\begin{myCla} \label{c: n_nodes}
For a station with $m$ transit lines, there are i) $m$ line nodes, ii) $\sum\limits_{i=1}^{m} C_{m}^i$ station nodes, and iii) $\sum\limits_{i=1}^{m} i\cdot C_{m}^i$ arrival nodes.
% \begin{itemize}
% \item $\sum\limits_{i=1}^{m} C_{m}^i$ station nodes, 
% \item $\sum\limits_{i=1}^{m} i\cdot C_{m}^i$ arrival nodes,
% \item and $m$ line nodes.
% \end{itemize}
\end{myCla}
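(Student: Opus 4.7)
The plan is to prove each part by exhibiting a bijection between the set of nodes of that type and a simple set-theoretic object, then counting the latter. From Section~\ref{network_structure}, a line node $L_y^i$ is indexed by a single line $i$, a station node $S_y^X$ by a subset $X$ of candidate lines, and an arrival node $A_y^{i,X}$ by a subset $X$ plus a distinguished line $i \notin X$ (the first to arrive). So the problem reduces to counting these indexing sets correctly over all reachable configurations at a station serving $m$ lines.

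For part (i), each line node is uniquely determined by the transit line $i$, which ranges over the $m$ candidate lines, giving $m$ line nodes. For part (ii), I would argue that a station node $S_y^X$ is reachable precisely when $X$ is a non-empty subset of the $m$ candidate lines: the initial station node has $X$ equal to the full candidate set, and every other reachable $S_y^X$ arises from some sequence of ``observe the first arrival of line $i\notin X$, decide not to board'' transitions that strictly shrink the waiting set, never allowing $X = \emptyset$ (since the passenger would have boarded or timed out). The count is therefore $\sum_{k=1}^{m}\binom{m}{k}$.

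For part (iii), I would partition arrival nodes by the size $k$ of the original waiting set $Y := X\cup\{i\}$ observed at the moment of first arrival. For each $k\in\{1,\ldots,m\}$ there are $\binom{m}{k}$ ways to choose $Y$, and for each such $Y$ there are $k$ ways to choose the first-arriving line $i\in Y$, with $X = Y\setminus\{i\}$ then being determined. Summing gives $\sum_{k=1}^{m} k\cdot\binom{m}{k}$ arrival nodes. The main point requiring care, and essentially the only nontrivial step, is justifying that every subset of lines (of any size from $1$ to $m$) can indeed appear as the waiting set at some reachable arrival node — this follows because, beginning from the full candidate set, the passenger can realize any sequence of ``wait past line $i$'' decisions that produces an arbitrary non-empty subset before the next arrival event, so no subsets are excluded from the count. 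A quick sanity check with $m=2$ yields $2$ line nodes, $3$ station nodes, and $4$ arrival nodes, matching the formulas and the representation in Figure~\ref{fig:network}.
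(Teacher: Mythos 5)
Your proof is correct and follows essentially the same route as the paper's: count line nodes by lines, station nodes by non-empty subsets of candidate lines, and arrival nodes by pairs (non-empty waiting set, distinguished first-arriving line), which is exactly the paper's ``each station node times each possible first arrival'' decomposition. The added reachability discussion and the $m=2$ sanity check are fine but not a substantive departure.
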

\begin{proof}
i) For every transit line, there is a corresponding line node. Therefore, there are $m$ line nodes in total. ii) For each combination of transit lines representing the non-empty set of lines yet to arrive, there is a station node. Therefore, there are $\sum\limits_{i=1}^{m} C_{m}^i$ station nodes in total. iii) For each station node, any transit line can be the line arriving the first and yield an associated arrival node. Therefore, there are $\sum\limits_{i=1}^{m} i\cdot C_{m}^i$ arrival nodes. 
\end{proof}

We now analyze the time complexity of the algorithm for a transit network with a station set $Y$ and each station has no more than $m$ transit lines. Let $M_y$ be the set of transit lines at station $y$. The algorithm first computes all the normalized waiting time distributions that are needed in the subsequent computation, which takes $O(|Y|\cdot m\cdot T^3)$ time based on Equation~\ref{normalize}. In our implementation, we use $O(|Y|\cdot m \cdot T)$ memory to store $1 - \sum_{\alpha=0}^{\theta} w_{y}^j(\alpha,0), \forall \theta \leq T, y \in Y, j \in M_y$, which decreases the time complexity to $O(|Y|\cdot m\cdot T^2)$. 

\begin{myCla} \label{c: u_complexity}
For a station set $Y$ in which each station has no more than $m$ transit lines, the time complexity of computing the utility functions for a time budget of $T$ is:
\begin{itemize}
\item $O(|Y|\cdot m\cdot 2^{m-1}\cdot T^2)$ for all arrival nodes,
\item $O(|Y|\cdot (m^{2} - m) \cdot 2^{m-2}\cdot T^3)$ for all station nodes.
\end{itemize}
\end{myCla}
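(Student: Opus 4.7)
The plan is to bound the cost of evaluating the utility at a single node, multiply by the node counts supplied by Claim~\ref{c: n_nodes}, and collapse the resulting sums using the two binomial identities
\begin{equation*}
\sum_{i=1}^{m} i \, C_{m}^{i} = m \cdot 2^{m-1}, \qquad \sum_{i=1}^{m} i(i-1) \, C_{m}^{i} = m(m-1) \cdot 2^{m-2},
\end{equation*}
both of which follow by differentiating $(1+x)^m$ once (respectively twice) and evaluating at $x=1$.

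For the arrival nodes, I would first observe that each passenger state $(t,r)$ lies in $[0,T] \times [0,T]$, giving $O(T^2)$ states per node. By the arrival-node branch of Equation~\ref{equ: recurrence}, the utility at such a state is the maximum of at most two previously computed utilities, hence $O(1)$. The cost per arrival node is therefore $O(T^2)$. Claim~\ref{c: n_nodes} provides $\sum_{i=1}^m i \, C_m^i$ arrival nodes per station, which equals $m \cdot 2^{m-1}$ by the first identity. Multiplying by $|Y|$ produces the stated bound $O(|Y| \cdot m \cdot 2^{m-1} \cdot T^2)$.

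For the station nodes, I would fix a candidate set $X \subseteq M_y$ with $|X|=k$ and examine a single evaluation of $u_{S_y^X}(t,r)$ as written in Definition~\ref{def:stationnode}: an outer sum over $i \in X$ ($k$ terms), an inner sum over $\theta \in [1,t]$ (at most $T$ terms), and, for each summand, a product of $k-1$ survival factors of the form $1 - \sum_{\alpha=0}^\theta w_y^j(\alpha,r)$. The crux of the argument is that, after precomputing the zero-offset cumulatives $1 - \sum_{\alpha=0}^\theta w_y^j(\alpha,0)$ for all $(y,j,\theta)$, the normalization rule in Equation~\ref{normalize} expresses each shifted survival factor as a ratio of two table entries, hence retrievable in $O(1)$. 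Each single-state evaluation therefore costs $O(k(k-1) \cdot T)$, and aggregating over the $O(T^2)$ states gives $O(k(k-1) \cdot T^3)$ per station node. Since there are $C_m^k$ station nodes with $|X|=k$, the second identity gives $\sum_{k=1}^m k(k-1) \, C_m^k = m(m-1) \cdot 2^{m-2}$, and multiplying by $|Y| \cdot T^3$ yields the claimed bound.

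The main obstacle is the $O(1)$ look-up of shifted survival factors: if one recomputes the cumulative $\sum_{\alpha=0}^\theta w_y^j(\alpha,r)$ from scratch for each $(r,\theta)$, the product contributes an extra $T$ factor and the final bound inflates to $O(|Y|(m^2 - m) 2^{m-2} T^4)$. I would therefore dedicate the bulk of the proof to verifying, from Equation~\ref{normalize}, that the shifted survival factor can always be written as a constant-time combination of two entries from the precomputed zero-offset cumulative table. The remaining steps are routine counting powered by the two binomial identities above.
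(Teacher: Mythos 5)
Your proposal is correct and follows essentially the same route as the paper: counting $O(T^2)$ states per node with $O(1)$ work at arrival nodes, bounding the per-state cost at a station node by the size of the candidate set, and, crucially, using the normalization identity $1-\sum_{\alpha=0}^{\theta} w_{y}^j(\alpha,r) = \bigl(1 - \sum_{\alpha=0}^{r+\theta} w_{y}^j(\alpha,0)\bigr)/\bigl(1 - \sum_{\alpha=0}^{r} w_{y}^j(\alpha,0)\bigr)$ with an $O(|Y|\cdot m\cdot T)$ precomputed table to turn the shifted survival factors into $O(1)$ lookups and bring the station-node bound from $T^4$ down to $T^3$. The binomial identities you invoke are exactly the collapses the paper performs, so no gap remains.
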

\begin{proof}
Based on Definition~\ref{def:arrivalnode}, we need to update the utility functions at each arrival node for each possible remaining time budget $t$ and each waiting time $r$. In addition, based on Claim \ref{c: n_nodes}, there are $\sum\limits_{i=1}^{m} i\cdot C_{m}^i$ arrival nodes for a station with $m$ transit lines, so it takes $O(|Y|\cdot \sum\limits_{i=1}^{m} i\cdot C_{m}^i\cdot T^2)=O(|Y|\cdot m\cdot 2^{m-1}\cdot T^2)$ time to compute the utility at all the arrival nodes.

Assume that the station node of interest is $S_y^{X}$. To compute $U_{S_y^{X}}(t, r)$, for each transit line $i\in X$, we need to compute the probability of line $i$ arriving among all lines in $X$, which is $\sum_{\theta=1}^t w_{y}^i(\theta,r)\prod_{j\in X\backslash i}(1-\sum_{\alpha=0}^\theta w_{y}^j(\alpha,r))$ according to Definition~\ref{def:stationnode}. Therefore, for each $i\in X$ and passenger state $(t, r)$, we need $O((|X| - 1)\cdot T^2)$ to compute the above probabilities. Based on Definition~\ref{def:stationnode}, we need to update the utility functions at each station node for each possible remaining time budget $t$ and each waiting time $r$. Consequently, we need $O(|X|\cdot (|X|-1) \cdot T^{4})$ to update the utility functions for each station node in total. According to Claim \ref{c: n_nodes}, there are $\sum\limits_{i=1}^{m} C_{m}^i$ station nodes for a station with $m$ transit lines. Therefore, it takes $O(|Y|\cdot \sum\limits_{i=1}^{m} (C_{m}^i \cdot i \cdot (i-1) )\cdot T^4)=O(|Y|\cdot (m^{2} - m) \cdot 2^{m-2}\cdot T^4)$ time to compute the utility at all the station nodes, without re-using any information. We find that the probability of transit line $j$ arriving after $\alpha$ time given that the passenger has already waited $r$ time at station $y$, which appears in Definition \ref{def:stationnode} reads: $1-\sum_{\alpha=0}^\theta w_{y}^j(\alpha,r) = \frac{1 - \sum_{\alpha=0}^{r + \theta} w_{y}^j(\alpha,0)}{1 - \sum_{\alpha=0}^{r} w_{y}^j(\alpha,0)}$. In our implementation, we use $O(|Y|\cdot m \cdot T)$ memory to store $1 - \sum_{\alpha=0}^{\theta} w_{y}^j(\alpha,0), \forall \theta \leq T, y \in Y, j \in M_y$. Then, the time complexity for computing utility for all station nodes becomes $O(|Y|\cdot m^{2} \cdot 2^m\cdot T^3)$. 
\end{proof}

In summary, the time complexity of the algorithm is $O(|Y|\cdot m^2 \cdot 2^m\cdot T^3)$. Considering that there is a constant maximum number of transit lines at a station, the time complexity is $O(|Y|\cdot T^3)$, which leads to a pseudo-polynomial time algorithm in $|Y|$ and $T$. 

\section{Search space reduction}\label{sec: space}
Although the DP approach is a pseudo-polynomial time algorithm in $|Y|$ and $T$, the computation time can still be high in large-scale networks when $T$ is large. Therefore, we propose some search space reduction techniques to further decrease the computation time in practice. \par

\subsection{Eliminate the infeasible paths}
The simplest pruning technique we employ is to eliminate infeasible paths, i.e., paths that have zero probability of being used based on the minimum realizable travel time on each link. This pruning can be performed by simply running a shortest path search on a modified graph where the link weight is the minimum realizable travel time, to find the shortest path distance from each station to the destination. Assume that the minimum realizable travel time from station $y$ to the destination is $\alpha_y$. The utility for any node at station $y$ given a time budget smaller than $\alpha_y$ will be zero. 
% We check this condition only when we are computing the utility of a line node because that is the node where the passenger transitions from one station to another. 
\par

This method is similar to the pruning method in \cite{samaranayake2012tractable} for road networks. Since the complexity of the shortest path algorithm is dominated by the complexity of the SOTA problem, the cost of the pruning method is negligible compared to the total computation time. \par

\subsection{Search space reduction using transit line dominance}
To compute the utility at an arrival node, we need to compare the utility of boarding the transit service (line node) and continuing to wait (station node). In this section, we propose a definition of transit line dominance, and a set of computationally efficient conditions to check for such dominance. This allows us to save the computation for the utility of the corresponding station node in the case of a dominating line node. \par

% In the following sections, we design a series of techniques to identify transit line dominance to reduce computation time in practice.

\subsubsection{Dominance definition and properties} \label{sec: relation}

\begin{myDef}\label{def:dom}
Assume that the passenger is at node $A_{y}^{i, X}$ with passenger state $(t, r)$. Transit line $i$ dominates $X$ if $u_{L_{y}^{i}}(t) \geq u_{S_{y}^{X\backslash i}} (t, r)$. We use $i\succeq X$ to represent that $i$ dominates $X$, and $i\prec X$ to represent that $i$ does not dominate $X$.
\end{myDef}

According to the definition, if $i \succeq X$, the passenger should board the transit line $i$; If $i \prec X$, the passenger should continue to wait for the transit lines in $X$. We now propose a useful claim that will be used in the proof of the dominance properties that we propose later. \par

\begin{myCla}\label{c: addline}
Assume that the passenger is at station $y$ with passenger state $(t, r)$. $u_{S_y^{X}}(t, r) \leq u_{S_y^{X^\prime}}(t, r), \forall X^\prime \supset X$.
\end{myCla}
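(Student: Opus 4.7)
The claim asserts that enlarging the candidate line set at a station cannot decrease the passenger's utility at the station node. The intuition is a policy-domination argument: any feasible policy for the smaller candidate set $X$ can be simulated at $S_y^{X'}$ by ignoring every arrival of a line in $X'\setminus X$ (always choosing to continue waiting at any arrival node whose arriving line is ``extra''), so the optimal policy on $X'$ is at least as good as a simulation that reproduces the smaller problem's outcome distribution.

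My plan is to proceed by strong induction on the remaining time budget $t$, proving in tandem (i) $u_{S_y^X}(t,r) \leq u_{S_y^{X \cup \{k\}}}(t,r)$ for every $k \notin X$, and (ii) $u_{A_y^{i, Y}}(t,r) \leq u_{A_y^{i, Y \cup \{k\}}}(t,r)$ for every $k \notin Y \cup \{i\}$. Reducing the general inclusion $X\subset X'$ to the one-element extension $X' = X \cup \{k\}$ is immediate by iterating one new line at a time. Statement (ii) follows directly from (i) via Definition \ref{def:arrivalnode}: the boarding branch $u_{L_y^i}(t)$ is unchanged by enlarging the candidate set, so the max propagates the inequality. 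The base case $t \leq 0$ (or, more generally, $t$ below the minimum realizable travel time from $y$ to the destination) is trivial since both utilities vanish.

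For the inductive step on (i), I would lower-bound $u_{S_y^{X \cup \{k\}}}(t,r)$ by the value of the ``ignore-$k$'' policy: always continue to wait whenever $k$ arrives first (landing in $S_y^X$ with updated passenger state), and at every arrival node $A_y^{i,(X\setminus i)\cup\{k\}}$ with $i \in X$ mirror the optimal decision from $A_y^{i, X\setminus i}$ in the smaller problem. Expanding $u_{S_y^X}(t,r)$ via Definition \ref{def:stationnode} and inserting the partition $1 = \bigl(1-\sum_{\alpha=0}^\theta w_y^k(\alpha,r)\bigr) + \sum_{\alpha=0}^\theta w_y^k(\alpha,r)$ inside each product over $X\setminus i$ splits its mass into a ``$k$ has not arrived by $\theta$'' component that matches the $i$-first branch of the lower bound term-by-term (after invoking (ii) inductively at the strictly smaller budget $t-\theta$), and a ``$k$ has arrived by $\theta$'' residual. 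The main obstacle is absorbing this residual into the ``$k$-first then $i$-next'' branch of the lower bound: this requires the normalization identity \eqref{normalize} together with the independence of the line-specific waiting times to verify that the conditional distribution of the first arrival among $X$, given that $k$ was first at some time $\theta' < \theta$, coincides with the unconditional distribution of the first arrival in $X$ at the shifted state $(t-\theta', \theta'+r)$. Once this measure-matching is established (leveraging Assumption \ref{asum: nosame} so that simultaneous-arrival terms do not leak mass between the two sides), the inductive hypothesis closes the proof and yields the claim after iterating over the elements of $X' \setminus X$.
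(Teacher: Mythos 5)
Your proof is correct and rests on the same idea as the paper's: the paper's entire proof is the single sentence that a utility-maximizing passenger can only benefit from having more transit line choices, which is exactly the policy-domination intuition you open with. Your inductive formalization --- in particular the measure-matching step verifying that, after an ignored first arrival of the extra line $k$ at time $\theta'$, the conditional first-arrival distribution over $X$ coincides with the renormalized distribution at the shifted state $(t-\theta', r+\theta')$ --- is sound and simply fills in the details the paper leaves implicit.
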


\begin{proof}
The passengers in our system are utility maximizers. Therefore, more transit line choices will make the utility of the station node increase or remain the same.
\end{proof}

\begin{myPro}\label{cor: domsub}
Assume that the passenger is at station $y$ with passenger state $(t, r)$. If $i \succeq X$, then $i \succeq X^\prime, \forall X^\prime \subset X$. Correspondingly, if $i \prec X$, then $i \prec X^\prime, \forall X^\prime \supset X$. 
\end{myPro}

\begin{proof}
We prove the first part of the claim. The second part can be proved using similar reasoning. Assume that the passenger is at station $y$. According to Definition~\ref{def:dom}, $u_{L_y^{i}}(t) \geq u_{S_y^{X}}(t, r)$ if $i \succeq X$. In addition, for any $X^\prime \subset X$, $u_{S_y^{X^\prime}}(t, r) \leq u_{S_y^{X}}(t, r)$ based on Claim~\ref{c: addline}. Therefore, $u_{L_y^{i}}(t) \geq u_{S_y^{X^\prime}}(t, r)$, and thus $i\succeq X \implies i \succeq X^\prime, \forall X^\prime \subset X$. 
\end{proof}

\begin{myPro}
Assume that the passenger is at station $y$ with passenger state $(t, r)$. If $u_{L_y^{i}}(t) \geq u_{L_y^{j}}(t)$, then: (1) $i \succeq X$ for any $X$ such that $j\succeq X$; (2) $j \prec X^\prime$ for any $X^\prime$ such that $i \prec X^\prime$. 
\end{myPro}

\begin{proof}
For (1), if $j \succeq X$, then $u_{L_y^{j}}(t) \geq u_{S_y^{X}}(t, r)$ according to the definition of transit line dominance. Since $u_{L_y^{i}}(t) \geq u_{L_y^{j}}(t)$, we derive $u_{L_y^{i}}(t) \geq u_{S_y^{X}}(t, r)$, which proves that $i \succeq X$. \par

For (2), as $i \prec X^\prime$, $u_{L_y^{i}}(t) \leq u_{S_y^{X}}(t, r)$. Since $u_{L_y^{i}}(t) \geq u_{L_y^{j}}(t)$, $u_{L_y^{j}}(t) \leq u_{S_y^{X}}(t, r)$, which represents $j \prec X^\prime$.
\end{proof}

According to the two properties above, we can infer transit line dominance based on the transit line dominance that we already know. Therefore, at an arrival node $A_y^{i, X}$ with passenger state $(t, r)$, if we can infer $i \succeq X$ from the known dominance, we can reduce the computation for $u_{S_y^{X}}(t, r)$. \par

\subsubsection{Dominance conditions}\label{sec: dom}
In this section, we present a series of conditions that can be assessed efficiently, and are practically useful to reduce the number of utility computations at station nodes. The conditions are considered at arrival nodes in order. \par

% Proposition~\ref{p: dom} and Proposition~\ref{p: nondom} are used to check if a specific transit line dominates the other transit lines at a given station. For lines that are neither dominating or domninated, we use Proposition~\ref{p: early_stop} to determine if whether the utility function at the corresponing station node only needs to be computed for a sub-interval. \SA{This distinction is not clear. We need to discuss this.} Proposition~\ref{p: early_stop} is an early stopping criterion for the situation where Proposition~\ref{p: dom} and~\ref{p: nondom} do not apply, which will potentially reduce the number of time intervals required to compute the utility of a station node.

The first dominance condition is a subproblem pruning technique. If the condition is satisfied, we do not need to compute the utility at the corresponding station node at this specific passenger state. \par

\begin{myProp} \label{p: dom}
(Subproblem pruning) Assume that the passenger is at node $A_y^{i, X}$ with passenger state $(t, r)$. If $u_{L_y^i} (t)\geq \max\limits_{j\in X} \{ u_{L_y^j} (t-1)\}$, then $i\succeq X$. 
\end{myProp}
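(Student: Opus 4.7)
The plan is to establish the stronger intermediate bound $u_{S_y^X}(t, r) \leq \max_{j \in X} u_{L_y^j}(t-1)$ for every passenger state $(t, r)$; the proposition then follows immediately from its hypothesis, noting that since $i \notin X$ at $A_y^{i,X}$, we have $X \backslash i = X$ in the dominance definition. As a preliminary, I would record the (standard) monotonicity lemma that $\text{OPT}(v, t, r)$ is non-decreasing in the time budget $t$ at every node type. This follows by structural induction over the recurrence Equation~(\ref{equ: recurrence}): trivial at the destination, straightforward at line and arrival nodes, and at a station node one observes that the weights in Definition~\ref{def:stationnode} do not depend on $t$, only the integrand and the summation range $\theta \leq t$ do, both of which respect monotonicity.

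The main bound is then proved by induction on $|X|$. For the base case $|X| = 1$ with $X = \{j\}$, the subsequent arrival node $A_y^{j, \emptyset}$ has only the boarding edge, so Definition~\ref{def:stationnode} reduces to $u_{S_y^{\{j\}}}(t, r) = \sum_{\theta=1}^t w_y^j(\theta, r) \, u_{L_y^j}(t-\theta)$; the weights sum to at most $1$, and monotonicity combined with $\theta \geq 1$ gives $u_{L_y^j}(t-\theta) \leq u_{L_y^j}(t-1)$, yielding the bound. For the inductive step, Definition~\ref{def:stationnode} writes $u_{S_y^X}(t, r)$ as a weighted sum whose weights $p_{k,\theta}$ again sum to at most $1$ (this sum is exactly the probability that some line in $X$ arrives within the next $t$ time units), of terms $u_{A_y^{k, X\backslash k}}(t-\theta, \theta+r) = \max\{u_{L_y^k}(t-\theta),\; u_{S_y^{X\backslash k}}(t-\theta, \theta+r)\}$. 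The first branch is bounded by $u_{L_y^k}(t-1) \leq \max_{j \in X} u_{L_y^j}(t-1)$ by monotonicity. For the second branch the inductive hypothesis gives $u_{S_y^{X\backslash k}}(t-\theta, \theta+r) \leq \max_{j \in X \backslash k} u_{L_y^j}(t-\theta-1)$, and monotonicity together with $\theta \geq 1$ pushes this up to $\max_{j \in X} u_{L_y^j}(t-1)$ as well. Taking the weighted sum of these bounded terms gives the desired inequality for $X$.

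The main obstacle is the bookkeeping when invoking the inductive hypothesis on the smaller set $X \backslash k$ at the shifted passenger state $(t-\theta, \theta+r)$: the $\theta \geq 1$ guarantee, enforced by the discrete-time convention that waiting requires at least one time interval, is exactly what lets us raise the inductive bound at budget $t-\theta-1$ to one at budget $t-1$ via monotonicity, and is also precisely why the hypothesis of the proposition is phrased at $t-1$ rather than $t$. Once the intermediate bound is in hand, Proposition~\ref{p: dom} is immediate: by hypothesis $u_{L_y^i}(t) \geq \max_{j \in X} u_{L_y^j}(t-1) \geq u_{S_y^X}(t, r)$, which is the definition of $i \succeq X$.
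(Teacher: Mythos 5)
Your proof is correct and follows essentially the same route as the paper's: monotonicity of the utility in the time budget, combined with the observation that $u_{S_y^X}(t,r)$ is a weighted combination (weights summing to at most $1$) of line-node utilities evaluated at budgets no larger than $t-1$. The paper asserts that weighted-average fact in a single line, whereas you establish it by explicit induction on $|X|$; your version is if anything the more careful one, since the $\theta \geq 1$ bookkeeping makes precise why the hypothesis is stated at budget $t-1$ rather than $t$.
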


\begin{proof}
The utility function is a nondecreasing function with respect to the time budget, i.e., $u_{L_y^j} (t-1)\geq u_{L_y^j} (t^\prime), \forall t^\prime\leq t-1, \forall j \in X$. In addition, $u_{S_y^{X}}(t, r)$ is a weighted average of $u_{L_y^{j}}(t^\prime)$ for $t^\prime \leq t$ and $j \in X$ where the sum of the weight is not larger than 1. Therefore, if the condition is satisfied, $u_{L_y^i} (t) \geq u_{S_y^{X}}(t, r) \implies i \succeq X$.
\end{proof}

Proposition \ref{p: dom} needs $O(m)$ time to be checked. The intuition is as follows: Assume that we know all the transit lines in $X$ will arrive right after $i$ arrives. If boarding the transit line $i$ under this assumption has a larger utility than waiting for the rest, then boarding the transit line $i$ also has a larger utility than waiting for the rest without the assumption (i.e. the transit lines in $X$ might arrive later). \par

If Proposition~\ref{p: dom} cannot prove that $i \succeq X$, then we need to compute $u_{S_y^{X}}(t, r)$ explicitly. We show next that, in some cases, we can safely remove some transit lines in $X$ and only consider a subset of the candidate transit lines. Before we propose this candidate set pruning technique, we first provide the following definition of \textit{improving lines} that will be used later on. \par 
\begin{myDef} \label{def: improve}
Assume that the passenger is at station $y$ with passenger state $(t, r)$. Transit line $k$ improves line $j$ at station $y$ if $u_{S_y^{\{k, j\}}}(t, r) > u_{S_y^{\{j\}}}(t, r)$.
\end{myDef}

A sufficient and necessary condition for transit line $k$ improving line $j$ at station $y$ is given as follows. \par

\begin{myCla} \label{c6} 
Assume that the passenger is at node $A_y^{i, X}$ with passenger state $(t, r)$. Line $k$ improves line $j$ if there exists some state $(t-\gamma,r+\gamma)$ such that $u_{L_y^k} (t-\gamma)>\sum\limits_{\theta=1}^{t-\gamma} (w_{y}^j (\theta,r+\gamma)\cdot u_{L_y^j} (t-\gamma-\theta))$.
\end{myCla}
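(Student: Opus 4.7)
The plan is to prove the claim by a direct algebraic comparison of $u_{S_y^{\{k,j\}}}(t,r)$ and $u_{S_y^{\{j\}}}(t,r)$ through their dynamic programming recurrences. First, I would expand $u_{S_y^{\{k,j\}}}(t,r)$ via Definitions~\ref{def:stationnode} and~\ref{def:arrivalnode} into two sums over which of $k$ and $j$ arrives first, weighted by $w_y^k(\theta,r)(1-\sum_{\alpha=0}^\theta w_y^j(\alpha,r))$ and $w_y^j(\theta,r)(1-\sum_{\alpha=0}^\theta w_y^k(\alpha,r))$ respectively, with the utility at each corresponding arrival node being the maximum of boarding the line versus continuing to wait.

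The heart of the argument is to show that $u_{S_y^{\{j\}}}(t,r)$ admits the matching two-sum decomposition
\begin{equation*}
u_{S_y^{\{j\}}}(t,r) = \sum_{\theta=1}^t w_y^k(\theta,r)\left(1-\sum_{\alpha=0}^\theta w_y^j(\alpha,r)\right) u_{S_y^{\{j\}}}(t-\theta, r+\theta) + \sum_{\theta=1}^t w_y^j(\theta,r)\left(1-\sum_{\alpha=0}^\theta w_y^k(\alpha,r)\right) u_{L_y^j}(t-\theta).
\end{equation*}
This identity follows from the normalization rule in Equation~\ref{normalize}, which yields the telescoping relation $\left(1-\sum_{\alpha=0}^\theta w_y^j(\alpha,r)\right) w_y^j(\theta',r+\theta) = w_y^j(\theta+\theta',r)$, combined with Assumption~\ref{asum: nosame} ensuring that the ``$k$ arrives first'' and ``$j$ arrives first'' events partition the joint probability mass. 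Intuitively, the equality just says that the ``always wait for $j$'' policy applied to the candidate set $\{k,j\}$ yields exactly $u_{S_y^{\{j\}}}(t,r)$, since the marginal arrival distribution of $j$ is independent of the presence of $k$.

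Subtracting the two aligned expressions, the difference $u_{S_y^{\{k,j\}}}(t,r) - u_{S_y^{\{j\}}}(t,r)$ collapses to a sum of nonnegative terms of the form $[\max\{a,b\}-b]^+ = [a-b]^+$. In particular, the term at $\theta=\gamma$ in the first sum equals $w_y^k(\gamma,r)\left(1-\sum_{\alpha=0}^\gamma w_y^j(\alpha,r)\right) \cdot \left[u_{L_y^k}(t-\gamma) - u_{S_y^{\{j\}}}(t-\gamma, r+\gamma)\right]^+$, and since $u_{S_y^{\{j\}}}(t-\gamma,r+\gamma) = \sum_\theta w_y^j(\theta, r+\gamma) u_{L_y^j}(t-\gamma-\theta)$, the hypothesis makes this bracket strictly positive; provided the state $(t-\gamma,r+\gamma)$ is reachable with positive probability, this single term is strictly positive, so $u_{S_y^{\{k,j\}}}(t,r) > u_{S_y^{\{j\}}}(t,r)$ and $k$ improves $j$. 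The main obstacle is the alignment identity for $u_{S_y^{\{j\}}}(t,r)$: it requires a careful swap of summation order together with the normalization-induced telescoping of the conditional waiting distributions, and Assumption~\ref{asum: nosame} is invoked to dispose of the boundary residue arising from the discrete convention $\sum_{\alpha=0}^\theta$ that includes the endpoint $\alpha=\theta$. Once that identity is in hand, the remainder is a one-line application of $\max\{a,b\}\geq b$.
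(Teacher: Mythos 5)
Your proof is correct and formalizes exactly the argument the paper gives informally in two sentences: the right-hand side of the condition is $u_{S_y^{\{j\}}}(t-\gamma,r+\gamma)$, so the hypothesis says boarding $k$ at that state beats continuing to wait for $j$ alone, hence adding $k$ to the candidate set strictly increases the station-node utility. Your alignment identity and the $[a-b]^+$ bookkeeping make rigorous what the paper merely asserts verbally, and you correctly flag the one caveat both versions need --- the event that $k$ arrives first at time $\gamma$ (with $j$ not yet arrived) must have positive probability for the strict inequality to survive the expectation.
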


\begin{proof}
If there exists a passenger state $(t-\gamma,r+\gamma)$ satisfying the condition above in Claim~\ref{c6}, the passenger should board $k$ when it arrives at state $(t-\gamma, r+\gamma)$ instead of continuing to wait for $j$. Therefore, the utility at the station node with $\{k,j\}$ is larger than the utility at station node with only line $j$. 
\end{proof}

Now, we propose the candidate set pruning technique as follows. \par

\begin{myProp}
(Candidate set pruning) Assume that the passenger is at node $A_y^{i, X}$ with passenger state $(t, r)$. Let $Z = \{j \in X\mid u_{L_y^{i}(t)} < u_{L_y^j}(t - 1)\}$. When we compute $u_{S_y^{X}}(t, r)$, we can instead compute $u_{S_y^{X^\prime}}(t, r)$ where $X^\prime \subseteq X$ and $X^\prime = Z \cup \{j \in X \mid j \text{ improves at least one line in } Z\}$. 
\end{myProp}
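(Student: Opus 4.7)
The approach is to prove $u_{S_y^X}(t,r) = u_{S_y^{X'}}(t,r)$ by peeling off the lines in $X \setminus X'$ one at a time. (In the degenerate case $Z = \emptyset$ this identity fails since $X' = \emptyset$; however, Proposition~\ref{p: dom} directly yields $u_{A_y^{i,X}}(t,r) = u_{L_y^i}(t)$, which is exactly what the algorithm produces when it substitutes $X' = \emptyset$, so the replacement is still correct at the arrival node.)

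The first step, for every $k \in X \setminus X'$ and every $0 \leq \gamma \leq t$, is the inequality
\[
u_{L_y^k}(t - \gamma) \;\leq\; u_{S_y^{X \setminus k}}(t - \gamma,\, r + \gamma) .
\]
Because $k$ does not improve any $j \in Z$, the necessity direction of Claim~\ref{c6} gives $u_{L_y^k}(t-\gamma) \leq u_{S_y^{\{j\}}}(t-\gamma,\, r+\gamma)$ for every $j \in Z$ and every $\gamma \geq 0$; since $k \notin Z \subseteq X'$ we have $j \neq k$, so Claim~\ref{c: addline} lifts the bound from $\{j\}$ to $X \setminus k$. This inequality says boarding $k$ is dominated by continuing to wait at every reachable arrival node $A_y^{k,X \setminus k}(t-\theta,\, \theta+r)$, so that node's optimal value simplifies to $u_{S_y^{X \setminus k}}(t-\theta,\, \theta+r)$.

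The second step substitutes this simplification into Definition~\ref{def:stationnode} for $u_{S_y^X}(t,r)$ and verifies by direct calculation that the sum collapses to $u_{S_y^{X \setminus k}}(t,r)$. The key ingredients are the normalization identity $1 - W_y^j(\theta,r) = (1 - W_y^j(r+\theta,0))/(1 - W_y^j(r,0))$ implicit in~\eqref{normalize} and the independence of line arrivals; together they let the ``$k$ arrives first at $\theta$'' branch fold algebraically into the ``$k' \neq k$ arrives first'' branches, leaving a residual equal to the probability that two lines arrive in the same discretized interval, which Assumption~\ref{asum: nosame} discards. Iterating the peel across all $k \in X \setminus X'$ yields the claim; the iteration is valid because every intermediate set still contains $Z$, so the step-one inequality continues to lift at every stage. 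The main obstacle is this algebraic reduction: the two-line base case $|X|=2$ is a short computation collapsing the $k$-survival factor $(1 - W_y^k)$ against the $k$-first recursion on $u_{S_y^{X \setminus k}}$, and I would extend it to general $|X|$ by induction on $|X|$ (or on the time budget $t$) that mirrors the same collapse at each layer.
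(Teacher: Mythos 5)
Your Step 1 is fine (modulo calling it the contrapositive of the sufficiency direction of Claim~\ref{c6} rather than its ``necessity direction''), and it correctly disposes of the branch of Definition~\ref{def:stationnode} in which $k$ itself arrives first. The genuine gap is in Step 2, at exactly the point you defer to ``direct calculation.'' To collapse $u_{S_y^{X}}(t,r)$ onto $u_{S_y^{X\setminus k}}(t,r)$ you must also identify, for every $j\in X\setminus k$, the arrival node $A_y^{j,X\setminus j}$ with $A_y^{j,(X\setminus j)\setminus k}$; that requires removing $k$ from the \emph{continuation} set $X\setminus j$ at the later state $(t-\theta,\theta+r)$. Your stated invariant --- ``every intermediate set still contains $Z$'' --- holds for the outer peeling over $k\in X\setminus X'$ but fails for this inner recursion: once a line $j\in Z$ has arrived and been declined, the remaining set may contain no element of $Z$ at all, and then the lift via Claim~\ref{c: addline} from $u_{S_y^{\{j\}}}$ has no anchor. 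The hypothesis only says $k$ does not improve lines in $Z$; it says nothing about $k$ improving a line $l\in X'\setminus Z$, and after $Z$ is exhausted it is precisely such an $l$ that $k$ is competing with. (Your aside about a ``residual equal to the probability that two lines arrive in the same interval'' is also off the mark --- the first-arrival decomposition in Definition~\ref{def:stationnode} telescopes exactly, with no such residual --- but that is minor compared to the invariant failure.)

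This is not a repairable bookkeeping issue: the scenario the induction cannot reach is one where the identity itself can fail. Take $Z=\{j\}$, $X'=\{j,l\}$ with $l$ improving $j$, and $k\notin X'$. One can arrange that $j$ arrives early with $u_{L_y^j}$ already degraded, is optimally declined in favour of waiting for $l$, and that $l$ then fails to arrive in time with positive probability; if $k$ improves $l$ (which nothing forbids), $k$ serves as a strictly valuable fallback in that branch, giving $u_{S_y^{\{l,k\}}}>u_{S_y^{\{l\}}}$ at a reachable state and hence $u_{S_y^{X}}(t,r)>u_{S_y^{X'}}(t,r)$. The constraint $u_{L_y^k}(s)\le u_{S_y^{\{j\}}}(s,\rho)$ does not block this, because after $j$ has actually arrived and departed, $u_{S_y^{\{j\}}}(s,\rho)$ is a counterfactual that can remain bounded away from zero while the true continuation $u_{S_y^{\{l\}}}(s,\rho)$ collapses. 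For fairness, the paper's own one-line proof (``it is never better to take a line in $K$ than waiting for the lines in $Z$, i.e., $u_{S_y^{Z\cup K}}=u_{S_y^{Z}}$'') asserts the same collapse without addressing this regime, so you have correctly located where all the work lives; but the induction as you set it up cannot close there, and you should either strengthen the hypothesis (e.g., prune only lines that improve no line in $X'$, with the dominance tracked at every reachable state) or downgrade the claim to a heuristic.
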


\begin{proof}
Let $K = X \backslash X^\prime$. Since no line in $K$ improves the lines in $Z$, it is never better to take a line in $K$ than waiting for the lines in $Z$, i.e., $u_{S_y^{Z \cup K}}(t, r) = u_{S_y^{Z}}(t, r)$. Since $Z \subset X^\prime$, it is also never better to take the line in $K$ than waiting for the lines in $X^\prime$, which implies $u_{S_y^{X^\prime}}(t, r) = u_{S_y^{X}}(t, r)$. 
\end{proof}

We can consider the candidate set pruning technique as an extension of Proposition~\ref{p: dom} since $Z$ is obtained when we check the condition in Proposition~\ref{p: dom}. If $|Z| = 0$, the condition in Proposition~\ref{p: dom} is satisfied, $i \succeq X$. Otherwise, we prune the transit lines in $\{j \in X\backslash Z \mid j \text{ cannot improve any line in } Z\}$ before computing for the utility at the station node. \par

According to Definition~\ref{def:stationnode}, we iterate through the time intervals $\{z \mid 1 \leq z \leq t\}$ when we compute $u_{S_y^{X}}(t, r)$. Here $z$ is the waiting time for the first transit line arrival. In some cases, we can infer $i \succeq X$ when we finish the computation for an iteration $z < t$ and thus reduce the computation for the rest of the iterations. \par

\begin{myProp} \label{p: early_stop}
(Time interval pruning) Assume that the passenger is at $A_y^{i,X}$ with passenger state $(t,r)$. Let $u_{\text{max}}$ be the maximum utility among all subsequent arrival nodes of $S_{y}^{X}$ at passenger state $(t - z, r + z)$, i.e., $u_{\text{max}}(z)=\max\limits_{j\in X}\{{u_{A_y^{j,X\backslash j}}}(t-z,z+r)\}$. Let $p_{j, X}(\theta)$ be the probability that transit service $j$ is the first transit service arrives at the station among all transit lines in $X$, and it arrives at the station at passenger state $(t - \theta, r + \theta)$, i.e., $p_{j, X}(\theta) = w_{y}^j(\theta,r)\prod\limits_{k\in X\backslash j}(1-\sum\limits_{\alpha=1}^\theta w_{y}^k(\alpha,r))$. Let $u_{\text{sum}}(z)$ be the expected utility for the cases where there is one transit service in $X$ arriving at the station from passenger state $(t, r)$ to $(t - z, r + z)$, i.e., $u_{sum}(z)=\sum\limits_{j\in X} \sum\limits_{\theta=1}^z (p_{j, X}(\theta) \cdot u_{A_y^{j,X\backslash j}} (t-\theta,\theta+r))$. If there exists $z\leq t$ such that $u_{L_y^i} (t)\geq u_{sum}(z)+(1-\sum\limits_{j\in X} \sum\limits_{\theta=1}^z p_{j, X}(\theta))\cdot u_{max}(z)$, then $i\succeq X$.
\end{myProp}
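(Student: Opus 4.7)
The plan is to show $i \succeq X$ by establishing the upper bound
\[
u_{S_y^X}(t, r) \;\leq\; u_{sum}(z) + \Big(1 - \sum_{j \in X}\sum_{\theta=1}^z p_{j,X}(\theta)\Big)\,u_{max}(z),
\]
and then invoking the hypothesis to conclude $u_{L_y^i}(t) \geq u_{S_y^X}(t, r)$, which by Definition~\ref{def:dom} is exactly $i \succeq X$.

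First I would expand $u_{S_y^X}(t, r)$ via Definition~\ref{def:stationnode} as the double sum $\sum_{j \in X}\sum_{\theta=1}^t p_{j,X}(\theta)\, u_{A_y^{j, X\backslash j}}(t-\theta, \theta+r)$ and split the inner sum at $\theta = z$. The head contribution over $\theta \in \{1,\ldots,z\}$ coincides with $u_{sum}(z)$ by its very definition. For the tail contribution over $\theta \in \{z+1,\ldots,t\}$, I would bound each factor $u_{A_y^{j, X\backslash j}}(t-\theta, \theta+r)$ by $u_{max}(z)$, pull the constant $u_{max}(z)$ out of the summation, and replace the residual probability mass by $1 - \sum_{j \in X}\sum_{\theta=1}^z p_{j,X}(\theta)$. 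Combining head and tail delivers the displayed upper bound, after which the hypothesis immediately furnishes $u_{L_y^i}(t) \geq u_{S_y^X}(t, r)$.

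The main obstacle is the per-term bound $u_{A_y^{j, X\backslash j}}(t-\theta, \theta+r) \leq u_{max}(z)$ for $\theta > z$. It suffices to prove the monotonicity $u_{A_y^{j, X\backslash j}}(t-\theta, \theta+r) \leq u_{A_y^{j, X\backslash j}}(t-z, z+r)$, since the right-hand side is bounded by $u_{max}(z)$ by definition of the max. I would handle the two branches of Definition~\ref{def:arrivalnode} separately: for the boarding branch, $u_{L_y^j}(t-\theta) \leq u_{L_y^j}(t-z)$ follows because line-node utility is non-decreasing in the remaining time budget (it is an expectation of $u_{A}$'s that are extended by zero below the minimum realizable travel time); for the waiting branch, $u_{S_y^{X\backslash j}}(t-\theta, \theta + r) \leq u_{S_y^{X\backslash j}}(t-z, z+r)$ follows from a coupling/simulation argument in which any policy executed from the later state can be mimicked from the earlier state by first waiting passively for $\theta - z$ time units and ignoring any arrivals in that window, so the optimal earlier-state value is at least as large.

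A mild subtlety to address in the coupling is that $w_y^k(\cdot, r+\theta)$ is the tail of $w_y^k(\cdot, r)$ renormalized by the probability of no arrival of line $k$ in $[r, r+\theta]$; but this is precisely the event on which the tail terms of the split sum are conditioned (no line in $X$ has arrived by waited-time $z$, and a fortiori not by $\theta > z$), so the normalization constants cancel consistently and the coupling is valid. Once the monotonicity is in place, the proof reduces to the routine algebra outlined in the preceding paragraph.
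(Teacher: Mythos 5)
Your proposal follows essentially the same route as the paper's proof: expand $u_{S_y^X}(t,r)$ via Definition~\ref{def:stationnode}, split the sum at $\theta=z$, identify the head with $u_{sum}(z)$, and bound each tail term by $u_{max}(z)$ while bounding the tail probability mass by $1-\sum_{j\in X}\sum_{\theta=1}^z p_{j,X}(\theta)$. The only difference is that you try to justify the monotonicity $u_{A_y^{j,X\backslash j}}(t-\theta,\theta+r)\leq u_{A_y^{j,X\backslash j}}(t-z,z+r)$ (branch-by-branch plus a coupling), where the paper simply asserts that the utility function is non-decreasing; your extra care is reasonable, though the ``wait passively and ignore arrivals'' coupling sits uneasily with Assumption~\ref{asum: once} and is cleaner if the monotonicity is established by induction on the remaining budget.
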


\begin{proof}
If we can show that $u_{sum}(z)+(1-\sum\limits_{j\in X} \sum\limits_{\theta=1}^z p_{j, X}(\theta))\cdot u_{max}(z) \geq u_{S_y^{X}}(t, r)$, then the proposition is proved. According to Definition~\ref{def:stationnode}, $u_{S_y^{X}}(t, r) = u_{sum}(t)$. We can infer that $u_{S_y^{X}}(t, r) = u_{sum}(z) + \sum\limits_{j\in X} \sum\limits_{\theta=z + 1}^t (p_{j, X}(\theta) \cdot u_{A_y^{j,X\backslash j}} (t-\theta,\theta+r))$. Since the utility function is non decreasing, we know that $u_{\text{max}}(z) = \max\limits_{j \in X, z \leq \theta \leq t} \{{u_{A_y^{j,X\backslash j}}}(t-\theta,\theta+r)\}$. In addition, $1 - \sum\limits_{j\in X} \sum\limits_{\theta=1}^z p_{j, X}(\theta) \geq \sum\limits_{j\in X} \sum\limits_{\theta=z + 1}^t p_{j, X}(\theta)$. Therefore, we can conclude that the proposition is correct since $u_{S_y^{X}}(t, r) \leq u_{\text{sum}}(z)+(1-\sum\limits_{j\in X} \sum\limits_{\theta=1}^z p_{j, X}(\theta) \cdot u_{\text{max}}(z)$.
\end{proof}

Whenever we finish computing for a specific time interval $z = i$, we can check the condition in Proposition \ref{p: early_stop}. If the condition is not satisfied, we let $z = i + 1$ and continue the procedure. However, if the condition is satisfied, we know that $i\succeq X$ and thus can reduce the computation for $i + 1 \leq z \leq t$. \par

To sum up all the techniques discussed in this section and Section \ref{sec: relation}, we modify the procedure of computing the utility at arrival nodes as follows. Two dictionaries $dom$ and $nondom$ are used to record the transit line dominance and non-dominance. $update\_dom$ and $update\_nondom$ are two functions to update $dom$ and $nondom$ according to dominance properties in Section \ref{sec: relation}. Specifically, for any arrival node $A_y^{i,X}$, the pruning is done via the following set of steps. \par
\noindent\fbox{%
    \parbox{\textwidth}{%
Step 1: Check if $X$ is in $dom[i,t,r]$: if yes, return $u_{L_y^i}(t)$; Otherwise, go to Step 2.

Step 2: Check if $X$ is in $nondom[i,t,r]$: if yes, compute $u_{S_y^X}(t,r)$ and return it; Otherwise, go to Step 3.

Step 3: Check if $i\succeq X$ or $i\prec X$ using the pruning techniques. If $i\succeq X$, run $update\_dom$ and return $u_{L_y^i}(t)$; Otherwise, run function $update\_nondom$ and return $u_{S_y^X}(t,r)$.
    }%
}

We call the algorithm DP with dominance. Note that all the methods discussed retain the optimality of the solution.\par

\subsection{Heuristic rules}\label{sec:heur}
In this section, we propose three heuristic rules, which can be employed in the procedure of checking dominance, to further decrease the computation time. We will show in the numerical experiments that the heuristics can significantly reduce the computation time without much loss of the results accuracy. \par

\begin{myHeu}\label{h: dom}
Assume that the passenger is at node $A_y^{i,X}$ with passenger state $(t,r)$. Let $Z = \{j \in X\mid u_{L_y^{i}(t)} < u_{L_y^j}(t - 1)\}$, and $p = \prod\limits_{j \in Z} \sum\limits_{\theta = 1}^{t} I_{u_{L_y^{i}(t)} \geq u_{L_y^{j}(t - \theta)}} \cdot w_{y}^{j}(\theta, r)$ where $I_{u_{L_y^{i}(t)} \geq u_{L_y^{j}(t - \theta)}}$ is an indicator function. If $p \geq \epsilon$, we say that the passenger should board the transit line $i$. 
\end{myHeu}

This is an extension of Proposition~\ref{p: dom}. $\epsilon$ is a constant coefficient. $p$ is the probability of the case that boarding transit line $i$ has a larger utility than boarding any other single transit line in $X$. If $p$ is large, then only with a small probability, the realization of the waiting time for a transit line in $X$ can make its utility larger than boarding transit line $i$. \par

\begin{myHeu}\label{h: anyother}
Assume that the passenger is at node $A_y^{i, X}$ with passenger state $(t, r)$. If $u_{L_y^i} (t) \geq \sum\limits_{\theta=1}^{t} w_{y}^j (\theta,r)\cdot u_{L_y^j} (t-\theta), \forall j \in X$, we say that the passenger should board the transit line $i$. 
\end{myHeu}

In other words, the passenger should board transit line $i$ if boarding it has a larger utility than waiting for any single line in $X$. It is an approximation since $u_{S_y^{X}}(t, r) \geq u_{S_y^{\{j\}}}(t, r), \forall j\in X$ according to Claim~\ref{c: addline}. Therefore, it is possible that $u_{S_y^{X}}(t, r) > u_{L_y^i} (t)$. \par

\begin{myHeu}\label{h: timeinterval}
Assume that the passenger is at $A_y^{i,X}$ with passenger state $(t,r)$. Let $u_{max}(z)=\max\limits_{j\in X}\{{u_{A_y^{j,X\backslash j}}}(t-z,z+r)\}$, and $u_{sum}(z)=\sum\limits_{j\in X} (\sum\limits_{\theta=1}^z w_{y}^j(\theta,r)\prod\limits_{k\in X\backslash j}(1-\sum\limits_{\alpha=1}^\theta w_{y}^k(\alpha,r)) \cdot u_{A_y^{j,X\backslash j}} (t-\theta,\theta+r))$. If there exists $z\leq t$ such that $\beta \cdot u_{L_y^i} (t)\geqslant u_{sum}(z)+(1-\sum\limits_{j\in X} (\sum\limits_{\theta=1}^z w_{y}^j(\theta,r)\prod\limits_{k\in X\backslash j}(1-\sum\limits_{\alpha=1}^\theta w_{y}^k(\alpha,r))))\cdot u_{max}(z)$, then we say that the passenger should board transit line $i$.
\end{myHeu}

This is an extension of Proposition~\ref{p: early_stop}. The difference is that we add a constant relaxation coefficient $\beta$ ($\beta > 1$) in the condition. This represents that when we compute the utility of a station node, if $\beta$ times the utility of the line node is larger than the largest possible utility of the station node, we say that the passenger should board the transit line that arrives. $\beta$ in Heuristic~\ref{h: timeinterval} and $\epsilon$ in Heuristic~\ref{h: dom} are used to control the tradeoff between the results accuracy and computation performance. In the numerical experiments, we let $\beta = 1.25$ and $\epsilon = 0.75$. \par

In the following section, we present numerical results and compare the three versions of the solution algorithm introduced in this work: (1) DP; (2) DP with dominance; (3) DP with dominance and heuristics. \par

\section{Numerical experiments}\label{sec:experiments}
In this section, we present numerical experiments focused on illustrating the runtime performance of the various versions of the solution, as well as the practical value of using a SOTA policy in transit networks. All the algorithms are coded in Python $3.6$, and all tests are conducted on an AMD Ryzen processor computer (3.4 gigahertz, 16 gigabytes RAM). We first validate the algorithms and conduct a sensitivity analysis in a controlled setting using a synthetic transit network, and then test them on the Chicago transit network. 

\subsection{Estimating travel time and headway distributions}
% First, we present a general procedure to derive travel time and headway distribution from GTFS data.
To obtain the required network and transit line information, we develop a procedure for taking the input data (GTFS data) and generate all the processed data we need in the experiments, including travel time distributions, headway distribution, and waiting time distributions. The first step is to generate the travel time distributions for each transit line. Studies have shown that travel time distributions on road networks can be well approximated by a lognormal distribution (see \cite{emam2006using,hunter2009path,zou2014space}). Therefore, in our experiments, we model the travel time distributions using appropriately parameterized lognormal distributions. A lognormal distributed random variable $X$ is parameterized by two parameters $\mu$ and $\sigma$ that are, respectively, the mean and standard deviation of the variable's natural logarithm. For every two adjacent stations $S_1$ and $S_2$ in a transit line, we calibrate $\mu$ and $\sigma$ for the travel time between the two stations based on the following steps: \par

(1) Compute the minimum realizable travel time $t_m$ by dividing the distance between $S_1$ and $S_2$ by the corresponding speed limit. \par

(2) Let the mode of the lognormal distributed variable, i.e., the point of the maximum of the probability density function, be the difference between the scheduled departure time at $S_2$ and $S_1$. The mode of a lognormal distribued variable is $e^{\mu - \sigma^2}$. Therefore, if the schedule departure time at $S_1$ and $S_2$ are 8:45 and 9:00 and $t_m = 5$ min, then we have $e^{\mu - \sigma^2} = 10$. \par

(3) In the absence of more information from GTFS, $\sigma$ of the lognormal distribution is sampled uniformly from $0.25 \leq \sigma \leq 0.5$ so that the variance is neither too large or too small. \par

(4) Compute $\mu$ according to the mode and $\sigma$. Shift the distribution rightwards by $t_m$. \par

As $\sigma$ is randomly sampled from a given range, a sensitivity analysis is conducted in Section~\ref{sec: sigma}. If one can get access to the transit vehicle's real-time location, $\sigma$ can be obtained from the real realized travel time data. After computing the travel time distribution between every two adjacent stations, we can obtain the travel time distribution between the origin and all other stations on each transit line by computing the appropriate summation of the random variables due to the independence assumption. \par

The headway at the origin station of each transit line is set to the mean of the difference between the scheduled departure time for every two consecutive trips in our experiment time span. For instance, if the scheduled departure times for a transit line $i$ at the origin station are 8:45, 8:55, 9:10, then we let the headway be $\frac{10 + 15}{2} = 12.5$. The headway for other stations is computed as follows. Let $X_1$ and $X_2$ be the arrival time of the first transit vehicle and the second transit vehicle of the transit line of interest at a station $S$. $O$ is the origin, and $h$ is the deterministic headway at $O$. The cumulative distribution function of the headway is computed as follows: \par

$$P(X_2-X_1\leq t)=P(X_2\leq X_1+t)=\int_0^\infty P(X_2\leq x+t)\cdot P(X_1=x)dx$$
where $P(x_2\leq x+t)$ and $P(X_1 = x)$ are the cumulative distribution function of $X_2$ and the probability density function of $X_1$. Let $t_{O, S}$ be the travel time between $O$ and $S$. The distribution of $X_1$ is the same as the distribution for $t_{O, S}$, and the distribution of $X_2$ is to shift the distribution for $t_{O, S}$ rightwards by $h$. \par

Finally, we estimate the waiting time distribution from the headway distributions following~\cite{larson1981urban}. Let $\mathbb{E}_{i,y}[h_{i,y}]$ be the expected headway of transit line $i$ at station $y$, and $H_{i, y}(\cdot)$ be the cumulative distribution function of headway of transit line $i$ at station $y$. Then the waiting time reads:

$$w_y^i(t,0)=\frac{1}{\mathbb{E}_{i,y}[h_{i,y}]}\cdot (1-H_{i,y}(t))$$

\subsection{Synthetic network experiments}
The synthetic network we use is a 3-line 3-station transit network, as shown in Figure~\ref{fig:small_network}. The three transit lines pass through each of the three stations. The line attributes are presented in Table~\ref{tab: line_attributes}. The headway and travel time are in minutes. \par

\newsavebox{\tempbox}
\newlength{\tempwidth}
\begin{figure*}[htb]
%\begin{table*}[!t]
\savebox{\tempbox}{\includegraphics[scale=0.295,clip=true,draft=false,]{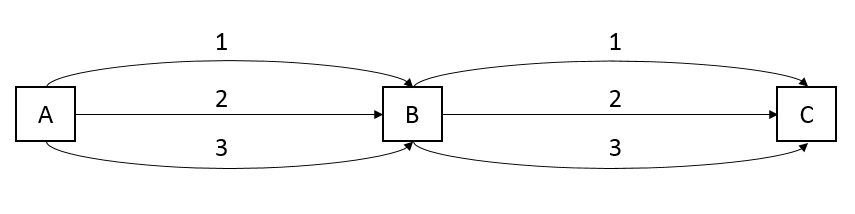}}%
\settowidth{\tempwidth}{\usebox{\tempbox}}%
\hfil\begin{minipage}[b]{\tempwidth}%
\raisebox{-\height}{\usebox{\tempbox}}%
\captionof{figure}{A 3-line synthetic transit network.}%
\label{fig:small_network}%
\end{minipage}%
%\vspace{-25pt}
%\end{table*}
\savebox{\tempbox}{% compute size of tabulat
\scriptsize{
\begin{tabular}{@{}cccc@{}}
\toprule
Line i & \begin{tabular}[c]{@{}c@{}}Headway\\ at station A\end{tabular} & \begin{tabular}[c]{@{}c@{}}Travel time\\ from A to B\end{tabular} & \begin{tabular}[c]{@{}c@{}}Travel time\\ from B to C\end{tabular} \\ \midrule
1      & 10                                                                  & 4                                                                 & 5                                                                 \\
2      & 15                                                                  & 4                                                                 & 3                                                                 \\
3      & 12                                                                  & 7                                                                 & 4                                                                 \\ \bottomrule
\end{tabular}
}}%
\settowidth{\tempwidth}{\usebox{\tempbox}}%
\hfil\begin{minipage}[b]{\tempwidth}%
\raisebox{-\height}{\usebox{\tempbox}}%
%\vspace{-7pt}
\captionof{table}{Line attributes of the 3-line synthetic transit network. All times are provided in minutes. }%
\label{tab: line_attributes}%
\end{minipage}%
%\vspace{-25pt}
\end{figure*}

\subsubsection{Performance analysis}\label{sec: Ae}
We first illustrate the performance of the algorithms, specifically the relationship between the computation time and the time budget. Let station $A$ be the origin, and station $C$ be the destination, and the time discretization of the algorithm to be $15$ seconds. The three algorithms introduced in Section~\ref{sec: solve} are tested on time budgets ranging from 10 min to 45 min with a step size of 2.5 min. \par

\begin{figure}
\begin{minipage}[t]{0.5\linewidth}
\centering
\includegraphics[width=3.5in]{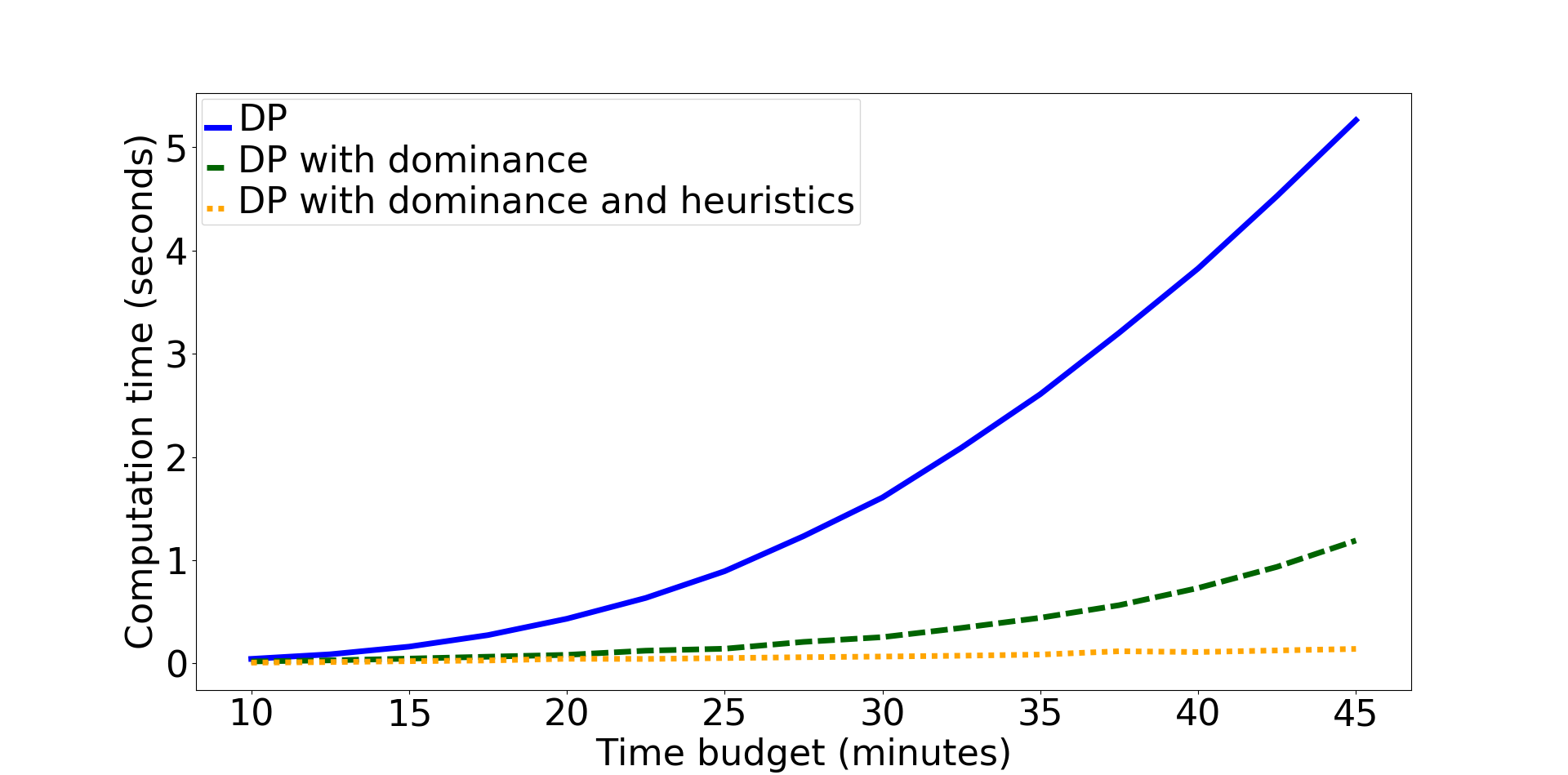}
\end{minipage}%
\begin{minipage}[t]{0.5\linewidth}
\centering
\includegraphics[width=3.5in]{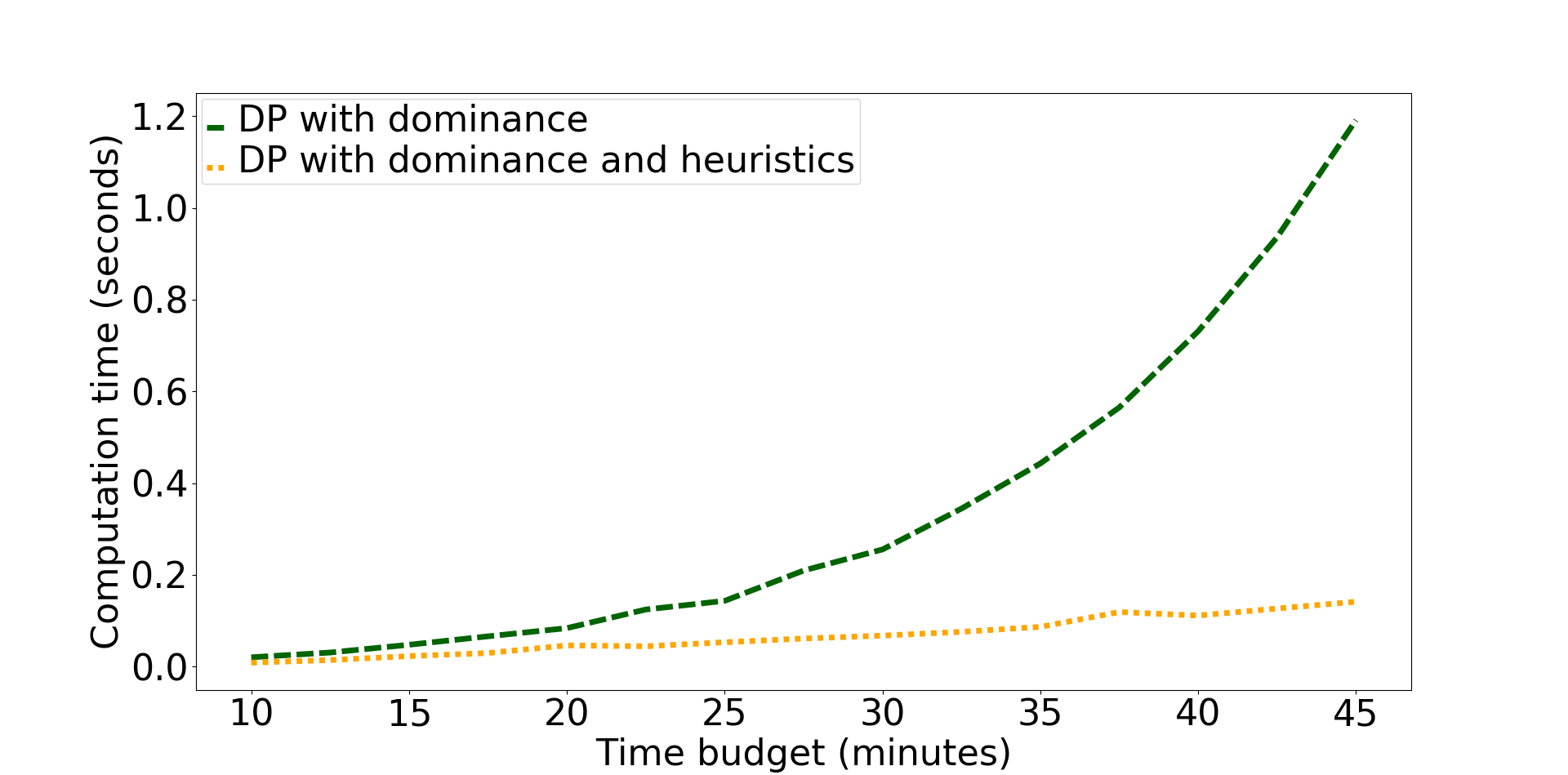}
\end{minipage}
\caption{Computation time of three algorithms on the synthetic network. \textit{Left:} All three algorithms. \textit{Right:} Only algorithms with the dominance based pruning to showcase the relative improvement. }
\label{fig:computation_time}
\end{figure}

As shown in Figure~\ref{fig:computation_time}, the algorithms with search space reduction techniques perform significantly better than the basic DP approach. The DP with the dominance test can provide an average computation time reduction of $77.8\%$ compared to the basic DP approach. The heuristics further decrease the computation time by $68.9\%$ on average relative to the DP with dominance. The time reduction percentage ($\frac{\text{Computation time of DP - Computation of DP with dominance}}{\text{Computation time of DP}}$) is typically higher when the time budget is large. When the time budget is 45 min, the time reduction can be $97.3\%$ for the DP with dominance and heuristics. The average relative error for the heuristic method is only $2.8\%$ in this experiment. Therefore, the heuristic (at least in this case) provides a satisfactory trade-off between accuracy and computation time. \par

\begin{figure}
\centering
\includegraphics[width=3.6in]{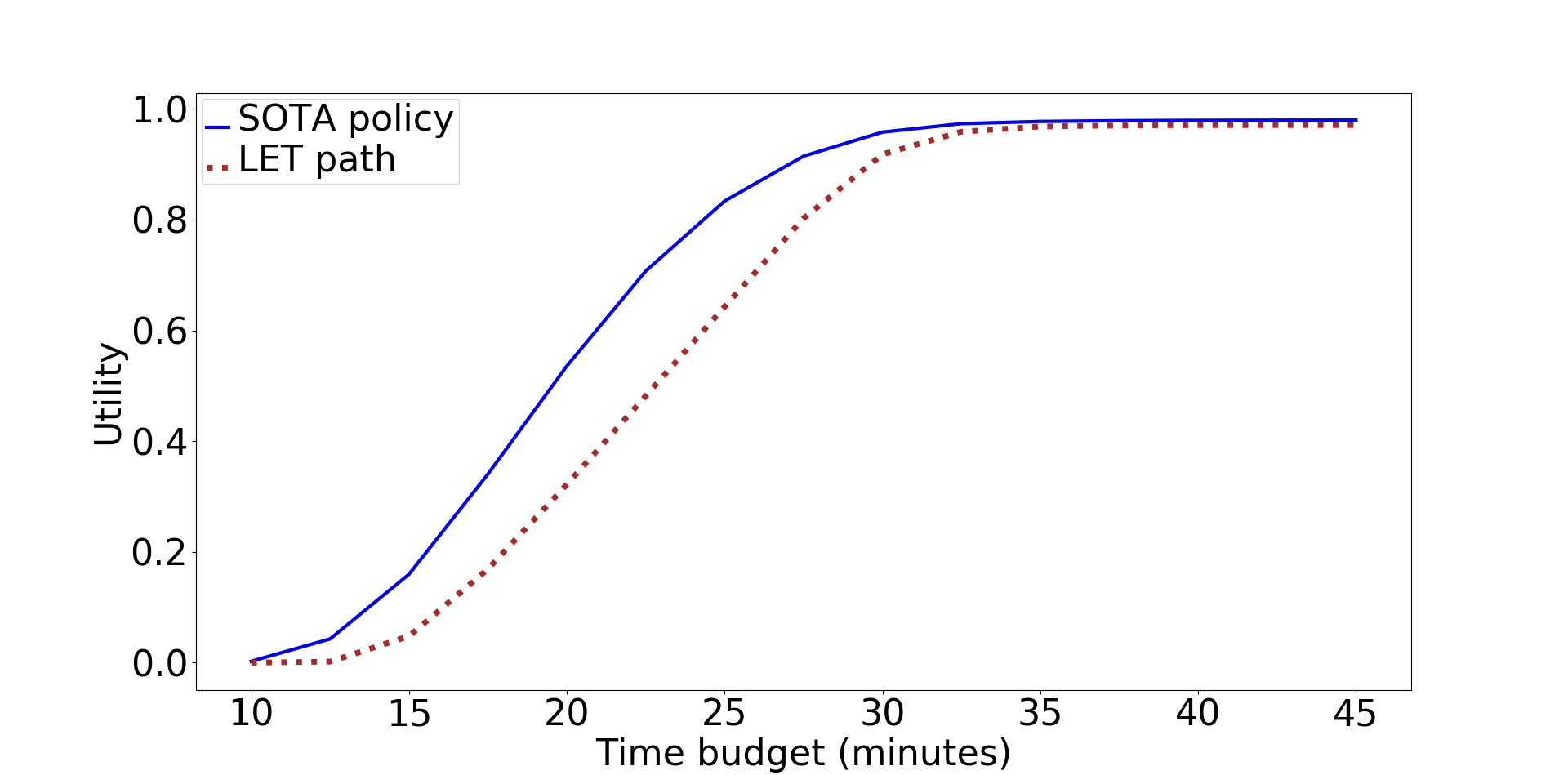}
\caption{Comparison of the utility between the SOTA policy and LET path.}
\label{fig:u_diff_small}
\end{figure}

We also quantify the benefits of utilizing a SOTA policy instead of using the least expected travel time (LET) path. This is done by using the expected value of both the travel time and waiting time distributions to compute the LET path between station $A$ and station $C$. The utility of the LET solution is then computed based on the probability of success when using the LET path for different time budgets. Figure~\ref{fig:u_diff_small} illustrates the utility difference between the LET path and the SOTA policy we obtain from our algorithm. When the time budget is very small, the utility difference is zero because the passenger cannot reach the destination on time regardless of the policy the passenger uses. Similarly, when the time budget is very large, the utility difference is also 0 because the passenger reaches the destination on time with high probability even with a sub-optimal route. The SOTA policy will always be no worse than the LET solution by definition. The maximum utility difference observed in this experiment is 23 percent (when the time budget is 22.5 min in this case). \par

\subsubsection{Sensitivity to the modes of the travel time distributions. }\label{sec: modes}
In this section, we analyze the effect of the modes\footnote{Note that \textit{mode} here is the statistical definition.} of travel time distributions on computation time. The modes of the travel time distributions on each link are sampled uniformly from a range instead of being predefined. Two sets of experiments are conducted. In the first set of experiments, we let the width of the range be 1. More specifically, we first generate a random number $i \in \{1,2,3,4,5,6,7,8,9\}$ uniformly, and then set the range to be $[i,i+1)$. Assume that $X_1$ and $X_2$ are two random variables sampled uniformly from this range. Then, $E(|X_1-X_2|) = \frac{1}{3}$ since $|X_1-X_2|$ follows a triangle distribution with the parameters $a=0$, $c=0$ and $b=1$. In the second set of experiments, the range is set to $[1,10)$, and $E(|X_1-X_2|)=3$ since $|X_1-X_2|$ follows a triangle distribution where $a=0$, $c=0$ and $b=9$. We call the first set of experiments \emph{low diff}, and the second set of experiments \emph{high diff}. Each set of experiments are conducted for 100 times. \par

\begin{figure}[htb!]
\centering
\includegraphics[width=0.65\textwidth]{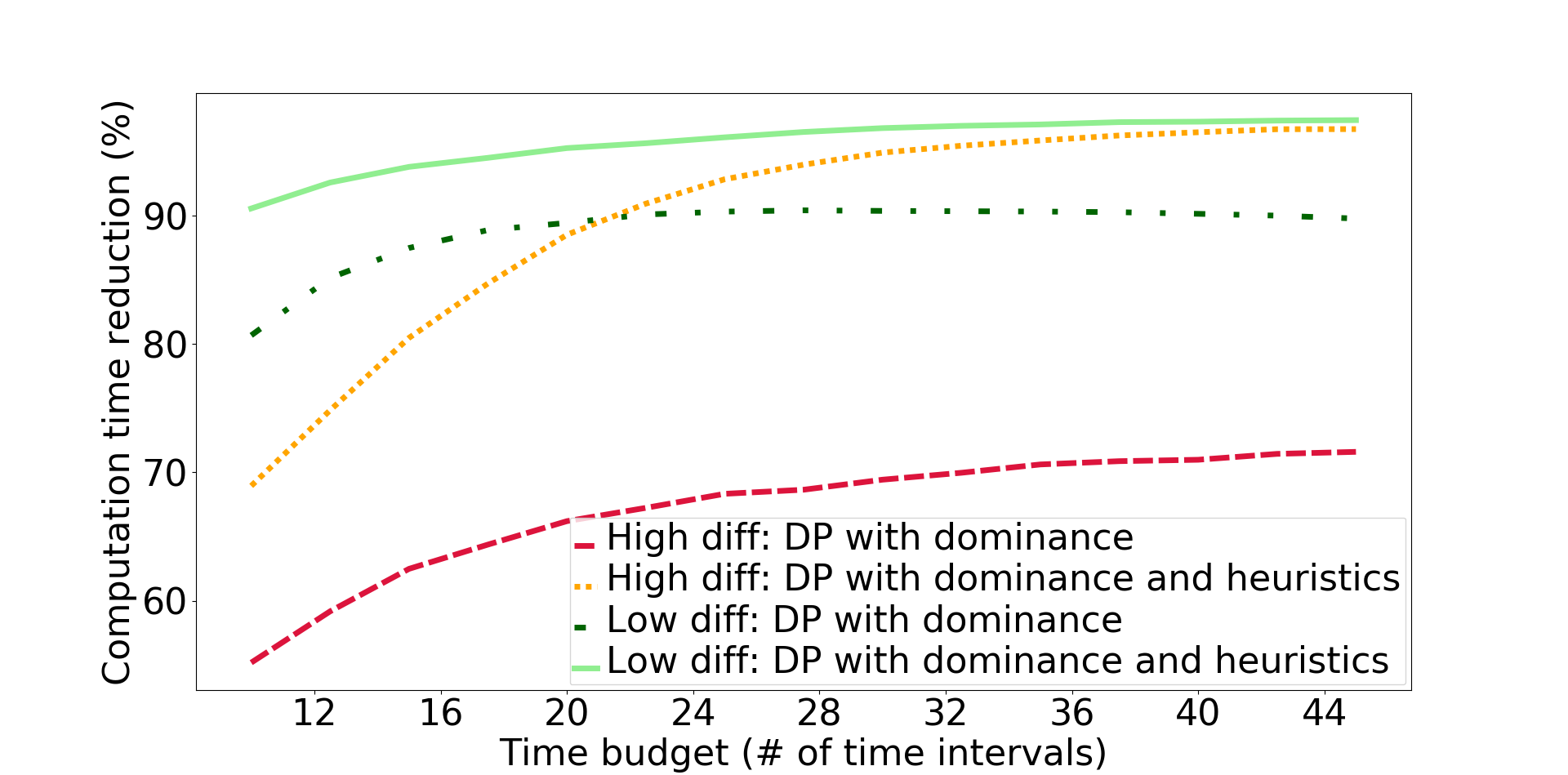}
\caption{\label{fig:highlow} Computation time reduction of speed-up techniques on two cases. \emph{high diff} represent the case where the travel time distributions' modes are highly different, while \emph{low diff} represents the case where the travel time distributions' modes are relatively close. }
\end{figure}

The computation time reduction with respect to the time budget is shown in Figure~\ref{fig:highlow}. The search space reduction techniques work well in both sets of experiments. The average computation time reductions for the DP with dominance are $88.9\%$ and $67.1\%$ for the \emph{low diff} and \emph{high diff} cases respectively relative to standard DP, and the time reductions are $95.7\%$ and $89.8\%$ for the DP with dominance and heuristics. The search space reduction techniques work better in the \emph{low diff} cases because the transit services that arrive the first will intuitively dominate the other transit lines more often with similar travel time distributions, since waiting is not likely to increase the utility. Another interesting result is that the heuristics will provide a higher time reduction for the \emph{high diff} cases compared to the DP with dominance algorithm. A potential reason is that a line will have a higher probability of being better than any other single candidate line at the same station in this case, which indicates dominance according to the Heuristic~\ref{h: anyother}. \par

%\YA{This is a possible reason because heuristic 2 says that the passenger should board $i$ if $i$ is better than any other single line. here Highdiff means the variance of modes for different lines are higher. Then, the probability of satisfying the condition in heuristic 2 will be higher. }

\subsubsection{Sensitivity to the $\sigma$ parameter} \label{sec: sigma}
We now analyze the sensitivity of the algorithm performance to the parameter $\sigma$ in the lognormal travel time distributions. Assume that the CDF of a lognormal distribution is $F(x)$, and $\gamma$ is the mode of the distribution. In Section~\ref{sec: Ae}, we use $\sigma=0.25$, which makes $F(1.5\cdot \gamma)\approx 0.9$ when $\gamma$ is in the range shown in Table \ref{tab: line_attributes}. To show the effects of $\sigma$ on the distribution, we compare the distribution with $\sigma = 0.25$ and $\sigma = 0.5$ in Figure~\ref{fig:lognormal}. When $\sigma=0.5$, $F(1.5\cdot \gamma)\approx 0.6$. \par

\begin{figure}[htb]
  \centering
  \includegraphics[width=0.6\linewidth]{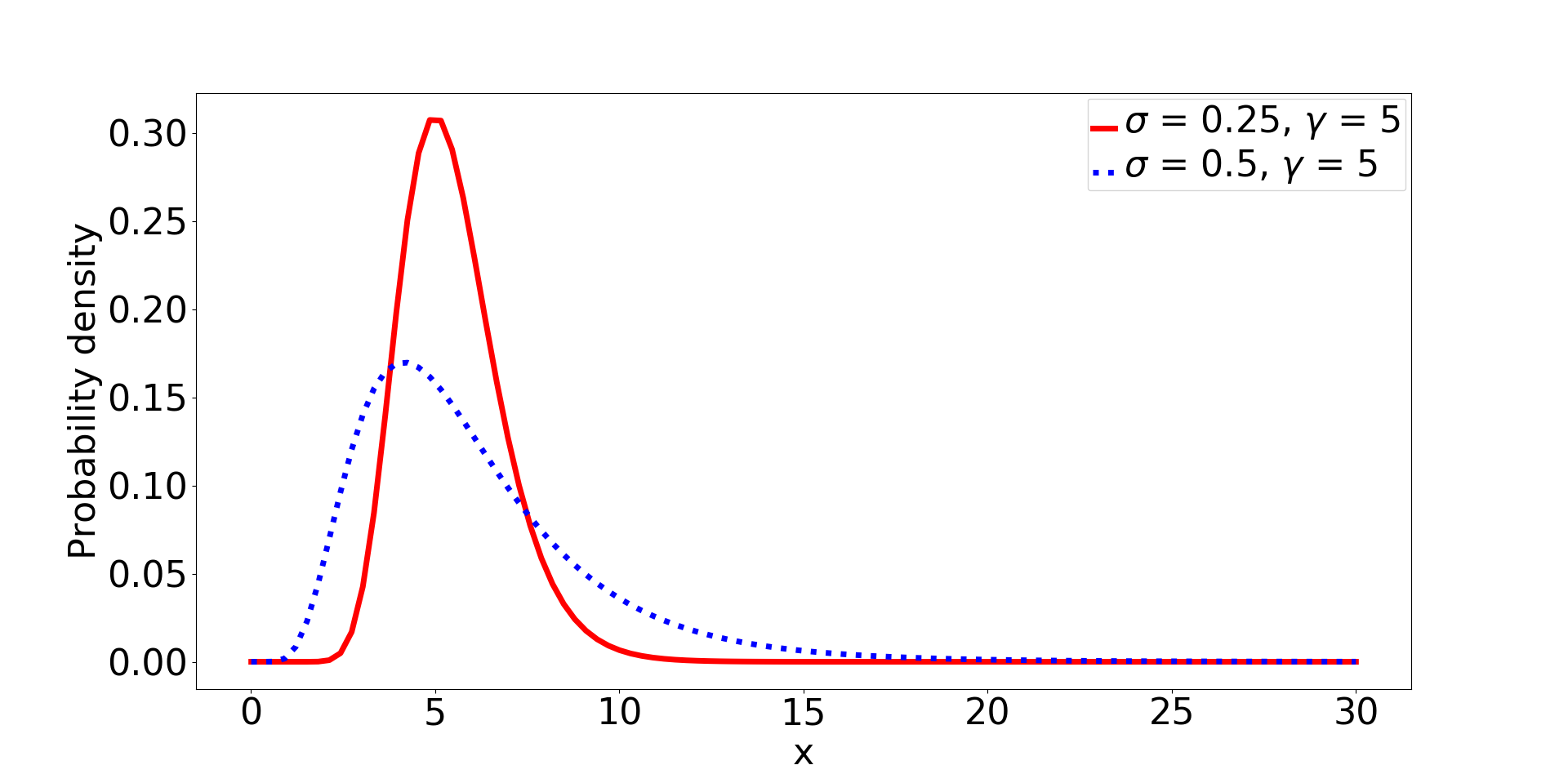}
  \captionof{figure}{The PDF of lognormal distributions with different $\sigma$.}
  \label{fig:lognormal}
\end{figure}

To analyze the sensitivity of $\sigma$ on the algorithm's performance, two sets of experiments are conducted. In the first set of experiments, the algorithms are tested under $\sigma=0.5$. In the second set of experiments, $\sigma$ is uniformly sampled from $[0.25,0.5]$ for each link of each transit line. Again, each set of experiments are conducted for 100 times. \par

\begin{table}[htb]
\centering
\caption{Algorithms' performance under different $\sigma$ settings.}
\label{diffsigma}
\begin{tabular}{@{}lll@{}}
\toprule
Parameter setting                                                                    & \begin{tabular}[c]{@{}l@{}}Computation time reduction\\  for DP with dominance\end{tabular} & \begin{tabular}[c]{@{}l@{}}Computation time reduction\\ for DP with dominance and heuristics\end{tabular} \\ \midrule
$\sigma=0.25$                                                                        & $77.8\%$                                                                                    & $92.6\%$                                                                                           \\
$\sigma=0.5$                                                                         & $80.3\%$                                                                                    & $94.1\%$                                                                                           \\
\begin{tabular}[c]{@{}l@{}}$\sigma$ randomly\\  chosen from $[0.25,0.5]$\end{tabular} & $79.8\%$                                                                                    & $93.9\%$                                                                                           \\ \bottomrule
\end{tabular}
\end{table}

As shown in Table~\ref{diffsigma}, the search space reduction techniques perform similarly under different $\sigma$ values. The computation time is reduced slightly more when the variance of the travel time distribution is larger. In the following experiments for the Chicago transit network, we use $\sigma$ randomly sampled from $[0.25, 0.5]$. \par

\subsection{Chicago network experiments}
We now use the Central/South region (including downtown) of the Chicago transit network as a case study. The network is created according to the GTFS data published by the Chicago Transit Authority (CTA) through Google’s GTFS project. The GTFS data contains schedules and associated geographic information. In the experiments, only active bus lines in the morning (6 am to 10 am) are considered. The transit network, which contains 49 transit lines and 1565 stations, is shown in Figure~\ref{fig:chicago}. \par

\begin{figure}[htb]
\centering
\includegraphics[width=0.4\textwidth]{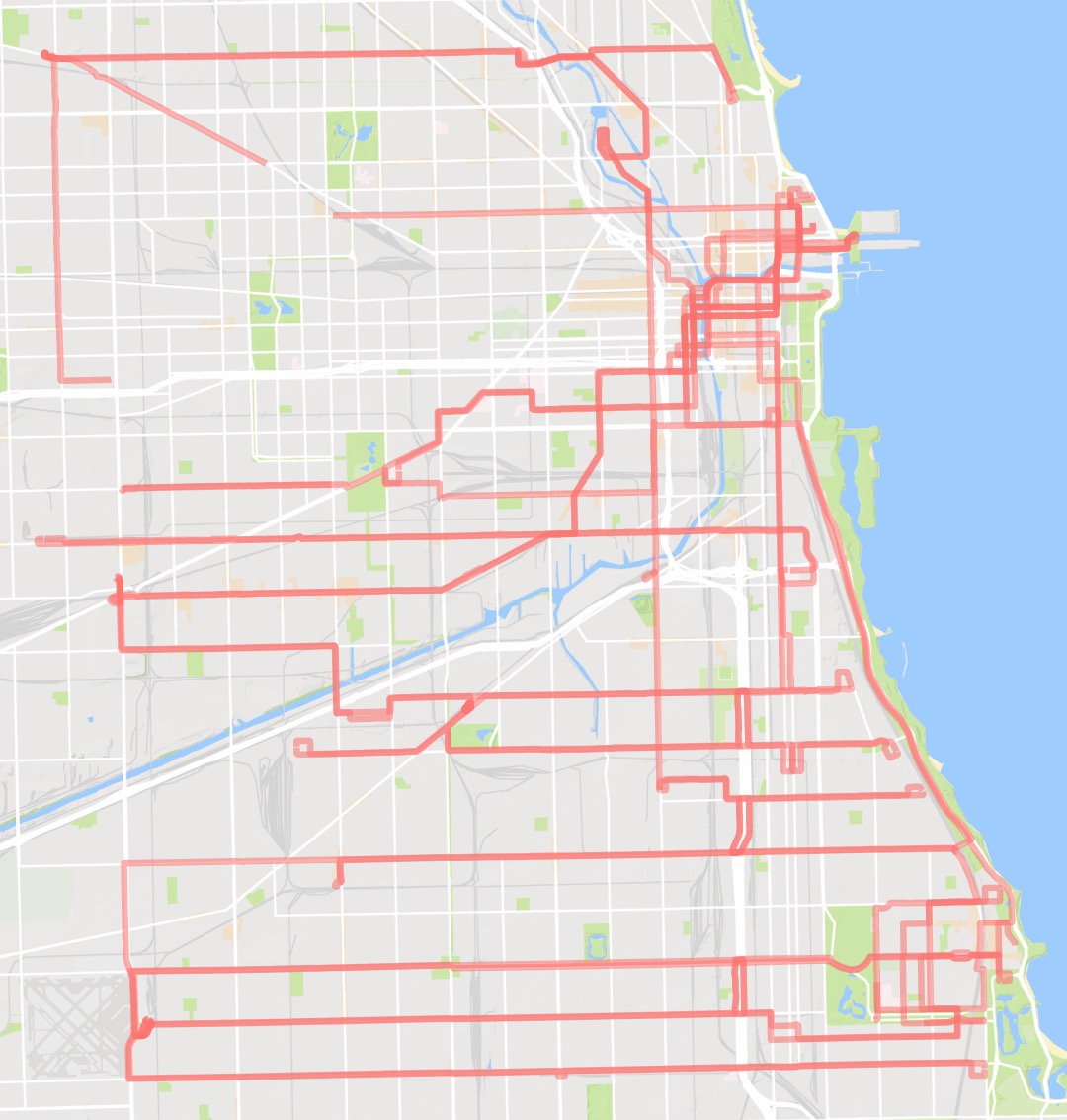}
\caption{\label{fig:chicago} Chicago transit network (Central/South region).}
\end{figure}

\subsubsection{Runtime performance in the Chicago network} \label{sec: Ae-chi}
The average travel time to commute in the United States is 26.1 min according to the U.S. Census Bureau \citep{www.census.gov}. Therefore, we select 100 ODs randomly from the station set under the constraints: (1) The expected trip duration is from 15 min to 45 min; (2) At least one of the origin and the destination is in the downtown area, to simulate the case of commuting. The experiments are conducted for the time budgets which also span from 10 min to 45 min. For each time budget, we use the same three algorithms described previously to compute the utility. The computation time for each algorithm is shown in Figure~\ref{fig:chicago_efficiency}. \par

\begin{figure}
\begin{minipage}[t]{0.5\linewidth}
\centering
\includegraphics[width=3.5in]{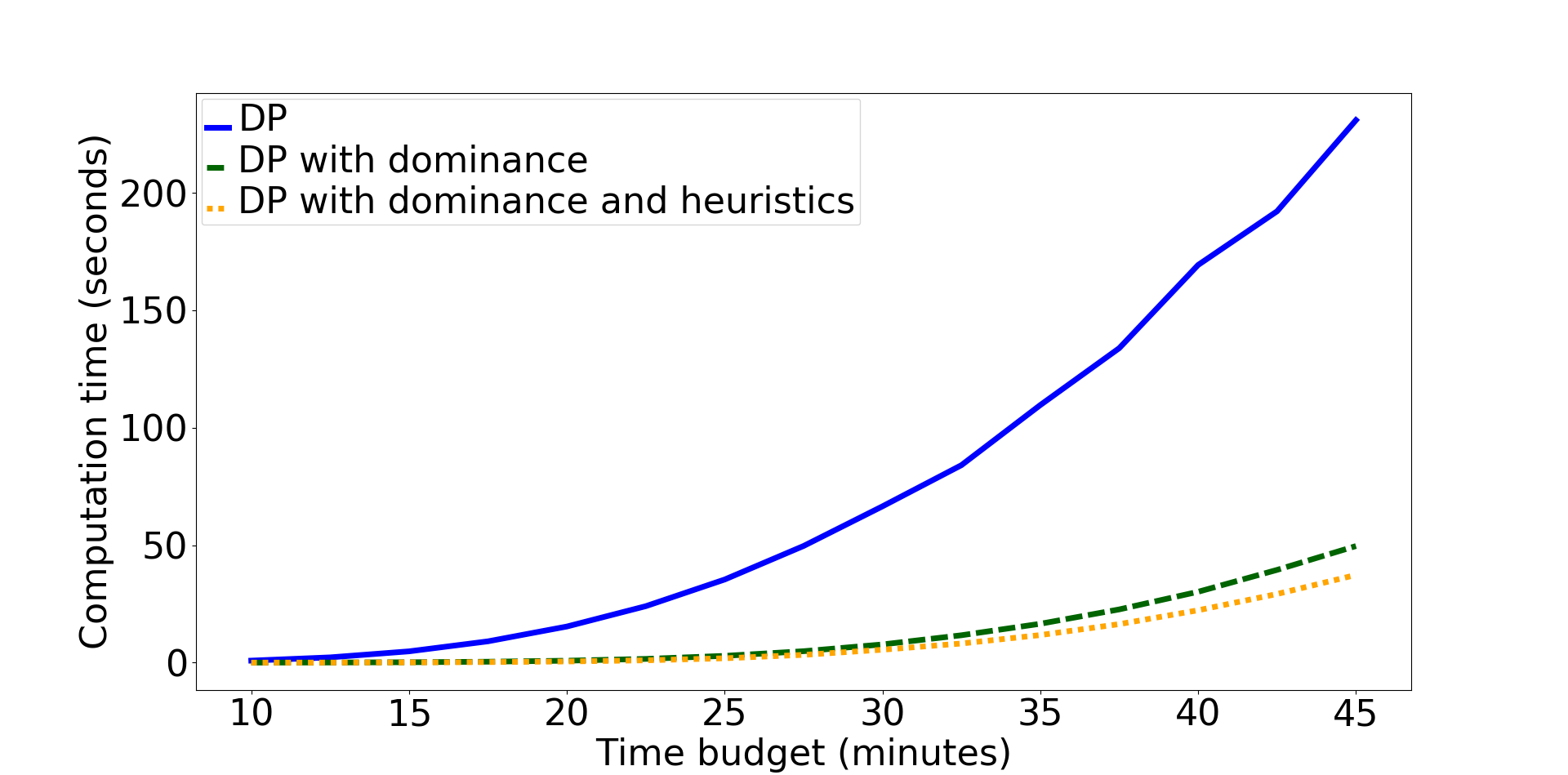}
\end{minipage}%
\begin{minipage}[t]{0.5\linewidth}
\centering
\includegraphics[width=3.5in]{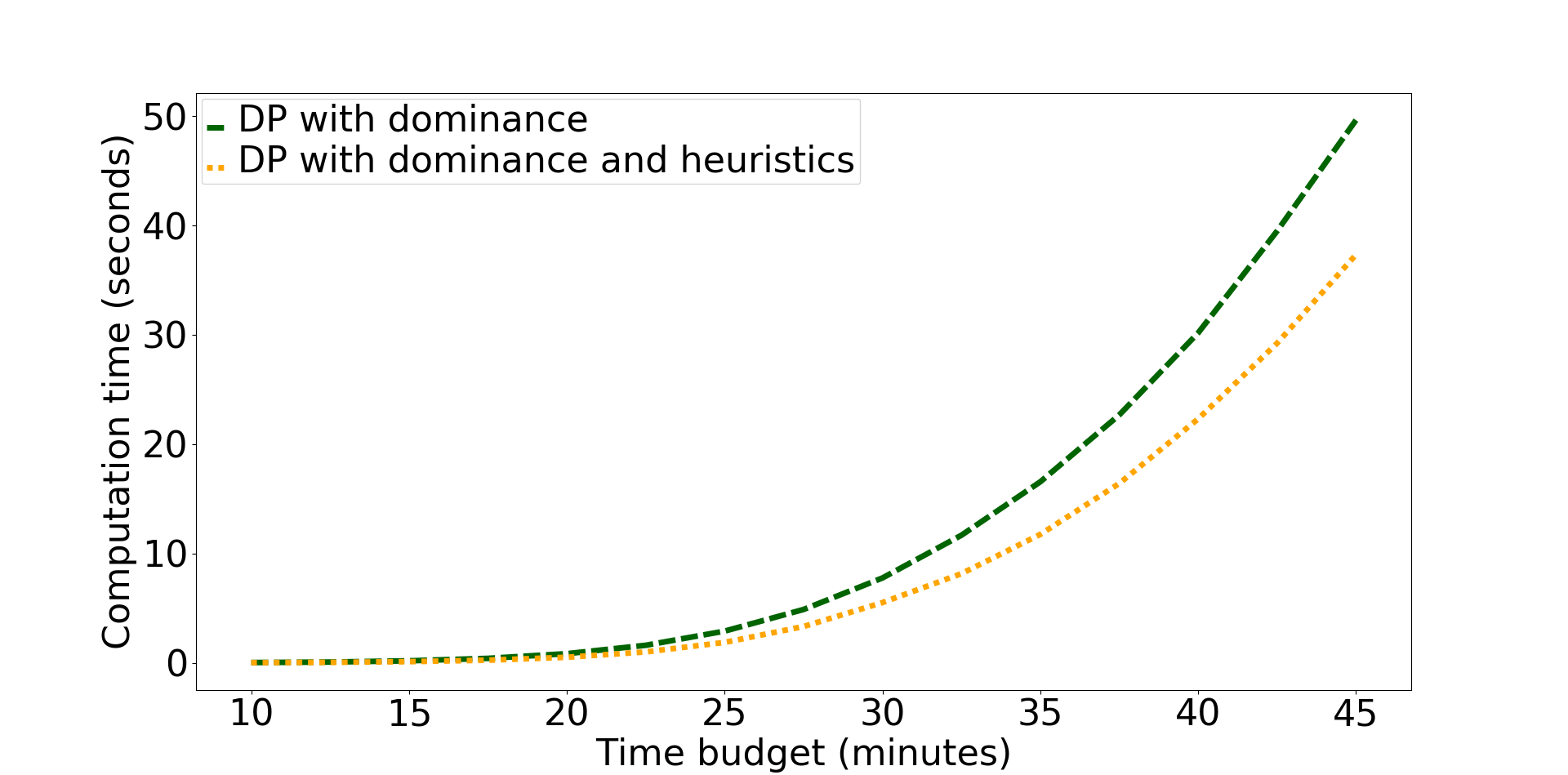}
\end{minipage}
\caption{Computation time of three algorithms on the Chicago transit network. \textit{Left:} All three algorithms. \textit{Right:} Only algorithms with the dominance based pruning to showcase the relative improvement. }
\label{fig:chicago_efficiency}
\end{figure}

The DP with dominance reduces the computation time by $89.3\%$ on average. The computation time reductions are significant, but not as high as in the synthetic network shown in Figure~\ref{fig:small_network} especially when $T$ is large. In the synthetic network, there are only three stations which limit the search space. However, in real-size networks, the search space will be much larger since there are more stations and transit lines to be considered. The heuristic rules further decrease the computation time by $34.0\%$ on average relative to the DP with dominance. The average relative error for the heuristic rules is about $1.6\%$. \par

\begin{figure}[htb]
\centering
\includegraphics[width=0.65\textwidth]{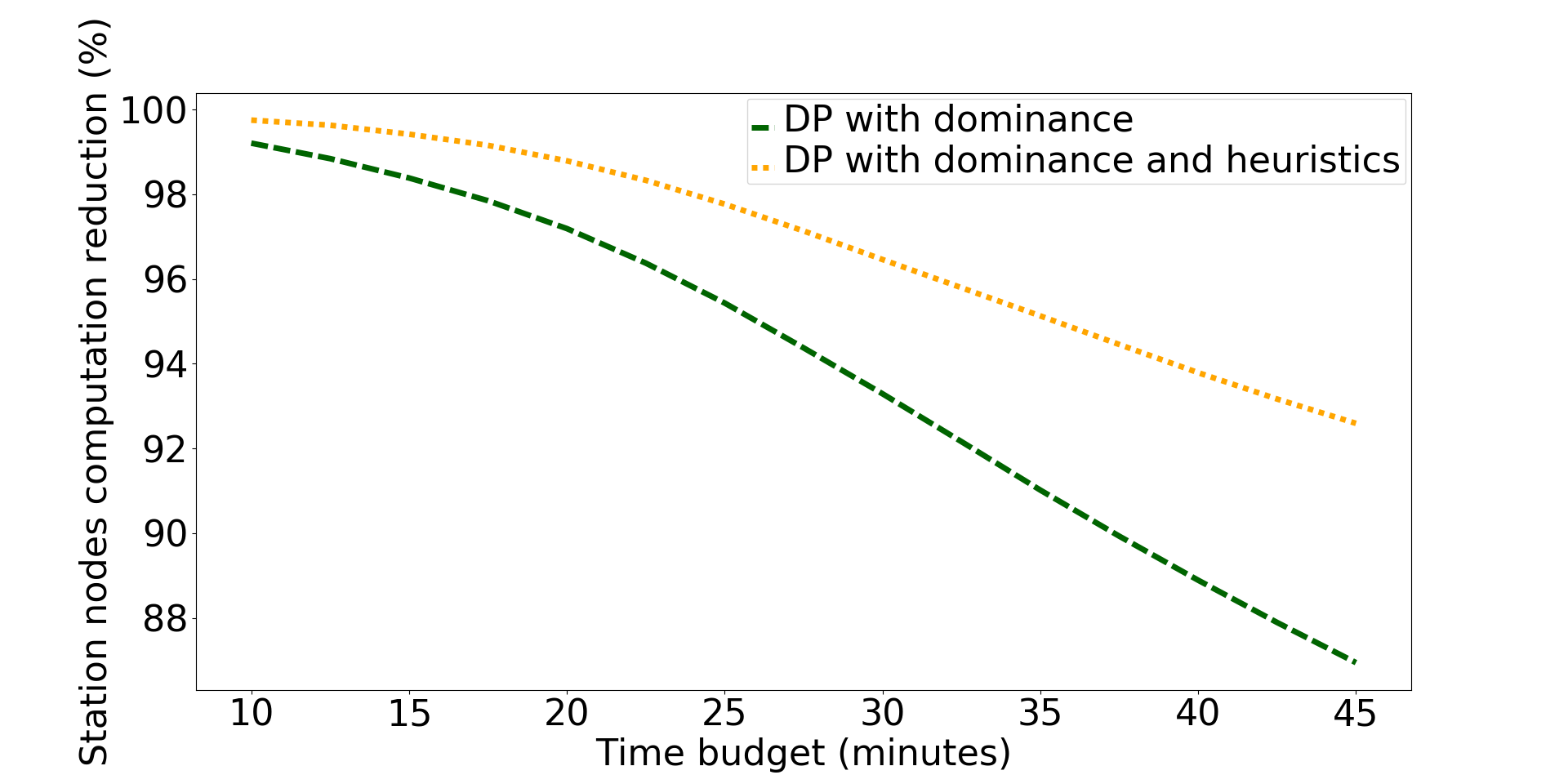}
\caption{\label{fig:station_reduction} The computation reduction for computing the utility on station nodes. The vertical axis shows the percentage of function calls reduction to compute the utility at stations. }
\end{figure}

Recall from Section~\ref{sec: space}, if $i\succeq X$ at an arrival node $A_y^{i, X}$ given passenger state $(t, r)$, we save the computation for $u_{S_y^{X}}(t, r)$. In the experiments, we also record the number of function calls to compute the utility at station nodes. Figure~\ref{fig:station_reduction} shows the percentage of the function calls reduction by DP with dominance and DP with dominance and heuristics. Both algorithms provide more than $90\%$ function calls reduction for all the cases. However, the reduction decreases as the time budget increases. When the time budget is higher, the set of feasible arrival nodes that followed by a station node $S_y^{X}$ at passenger state $(t, r)$ is larger. Here an arrival node $A_y^{i, X}$ is \textit{feasible} at passenger state $(t, r)$ if it is not pruned by the techniques in Section~\ref{sec: space}. Let $Z_y^X$ be the set of feasible arrival nodes followed by station node $S_y^{X}$. To save the computation for $u_{S_y^{X}} (t, r)$, every line in $\{i \mid A_y^{i, X} \in Z_y^{X}\}$ needs to dominate $X$, which is of less probability when $|Z|$ is higher. \par

The average computation time of the DP with dominance and heuristics is only about 9.2 seconds on a personal computer, which indicates the potential of running the algorithm in real-time applications. \par

\subsubsection{Compare the utility of the SOTA policy and LET path}
\begin{figure}[htb]
\centering
\includegraphics[width=0.65\textwidth]{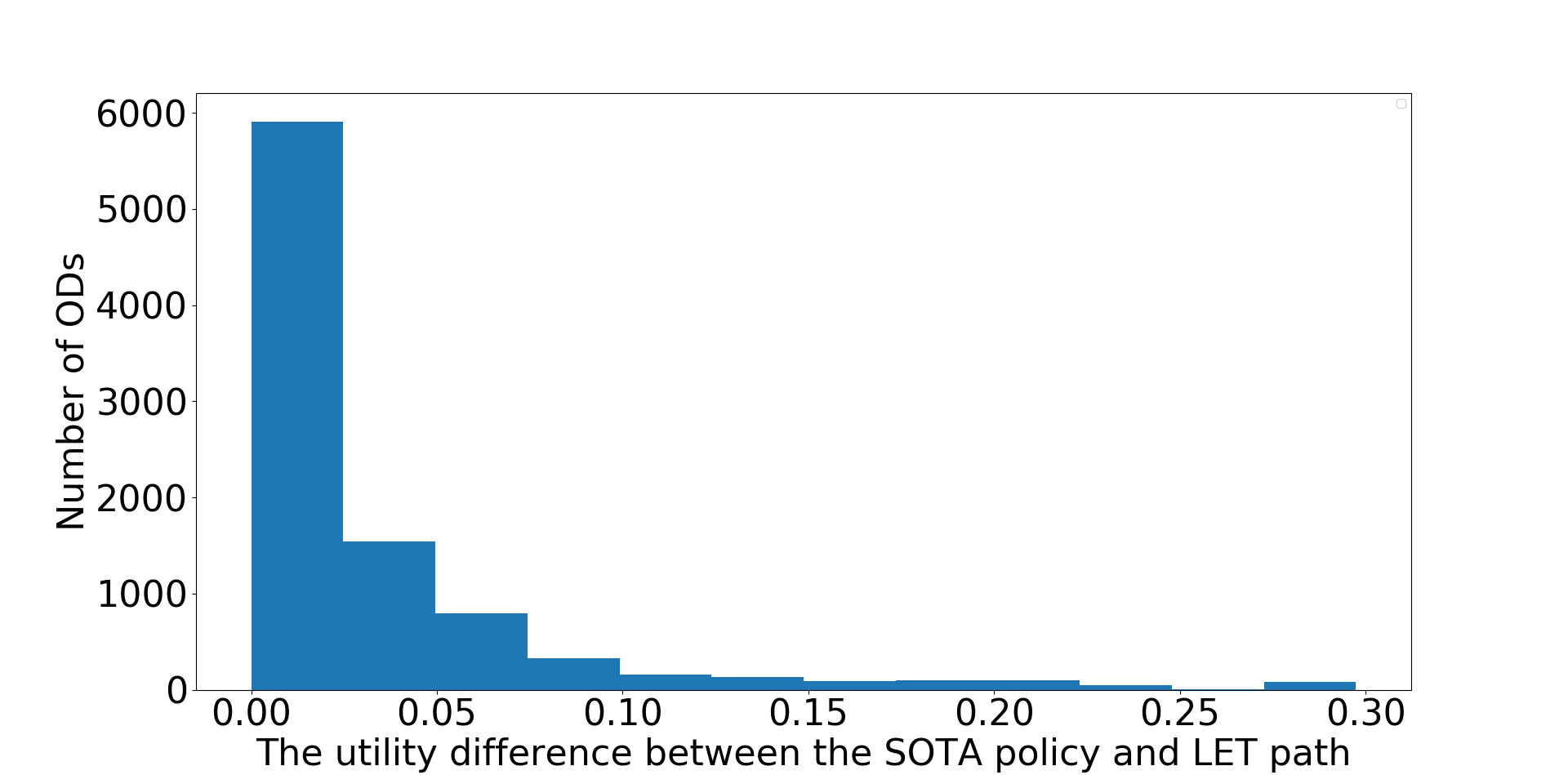}
\caption{\label{fig:utility_diff} The histogram of the utility difference between the SOTA policy and LET path on the Chicago transit network.}
\end{figure}

Recall from Section~\ref{sec: Ae}, the utility difference between the SOTA policy and LET path can be as large as 0.23 in the synthetic network. In this section, we test the utility difference in the Chicago transit network. Intuitively, if there is one and only one transit line that can directly take the passenger to the destination without transferring, the SOTA policy and LET path will likely be the same due to the time cost of transferring. Therefore, we compare the utility difference of all the ODs from the station set under the constraints: (1) The expected trip duration is from 15 min to 45 min; (2) At least one of the origin and the destination is in the downtown area; (3) There is no transit line or there are more than one transit line that can directly take the passenger to the destination. The number of ODs satisfying the above conditions is 9290. Again, the experiments are conducted for the time budgets from 10 min to 45 min. For each OD, we record the maximum utility difference between the SOTA policy and LET path among all the time budgets we test. Figure~\ref{fig:utility_diff} shows the histogram of the maximum utility difference. The utility difference for $19.6\%$ of the ODs is larger than 0.05, and the utility difference for $7.6\%$ of the ODs is larger than 0.1. The utility difference is close to 0 for most of the ODs, which are the cases that either there is only one route choice between the origin and the destination or there is one obvious better line than the other choices. It should be noted that the results highly depend on the network and the travel time distribution we use. \par

\begin{figure}
\begin{minipage}[t]{0.5\linewidth}
\centering
\includegraphics[width=3.6in]{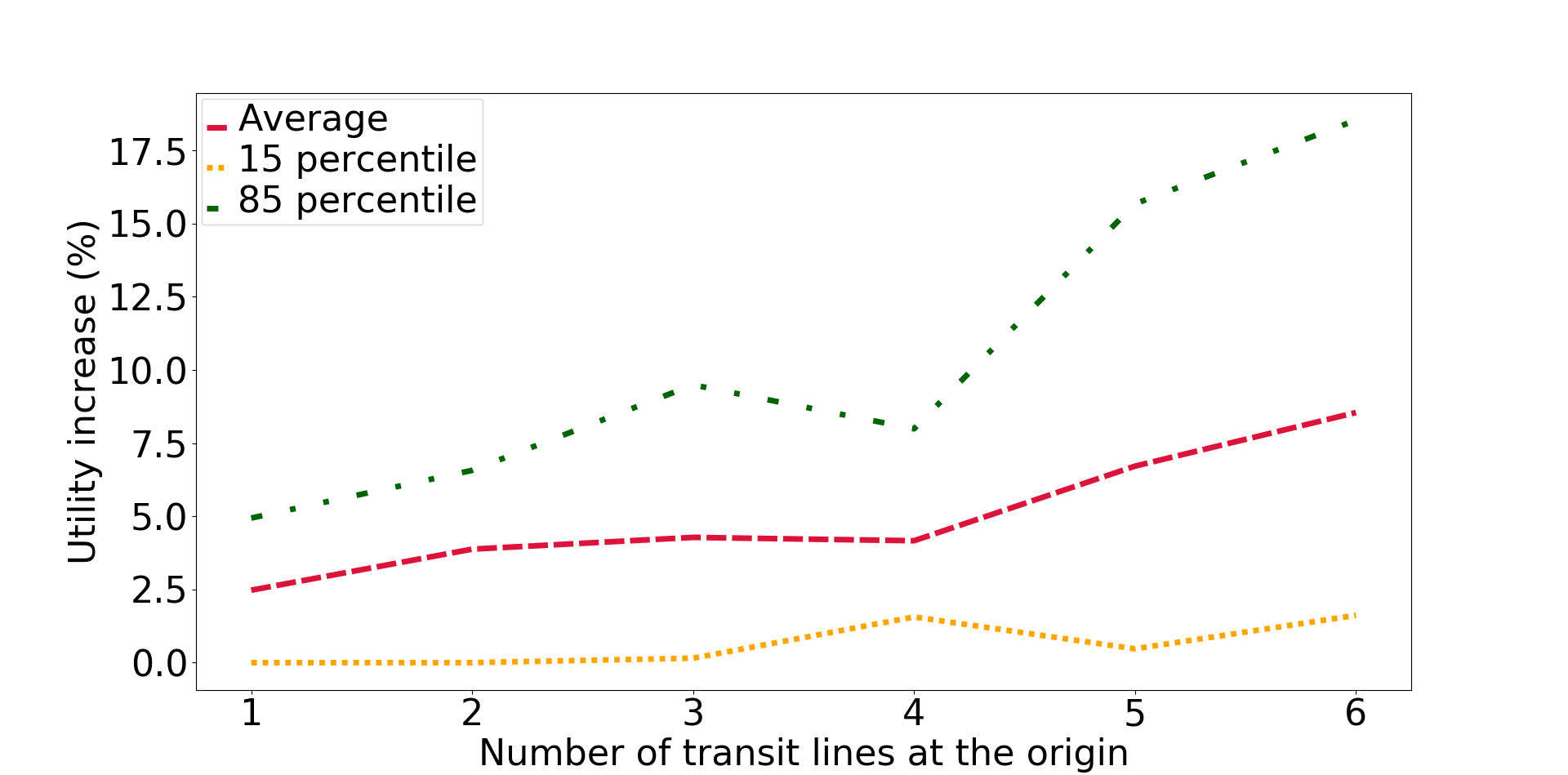}
\end{minipage}%
\begin{minipage}[t]{0.5\linewidth}
\centering
\includegraphics[width=3.6in]{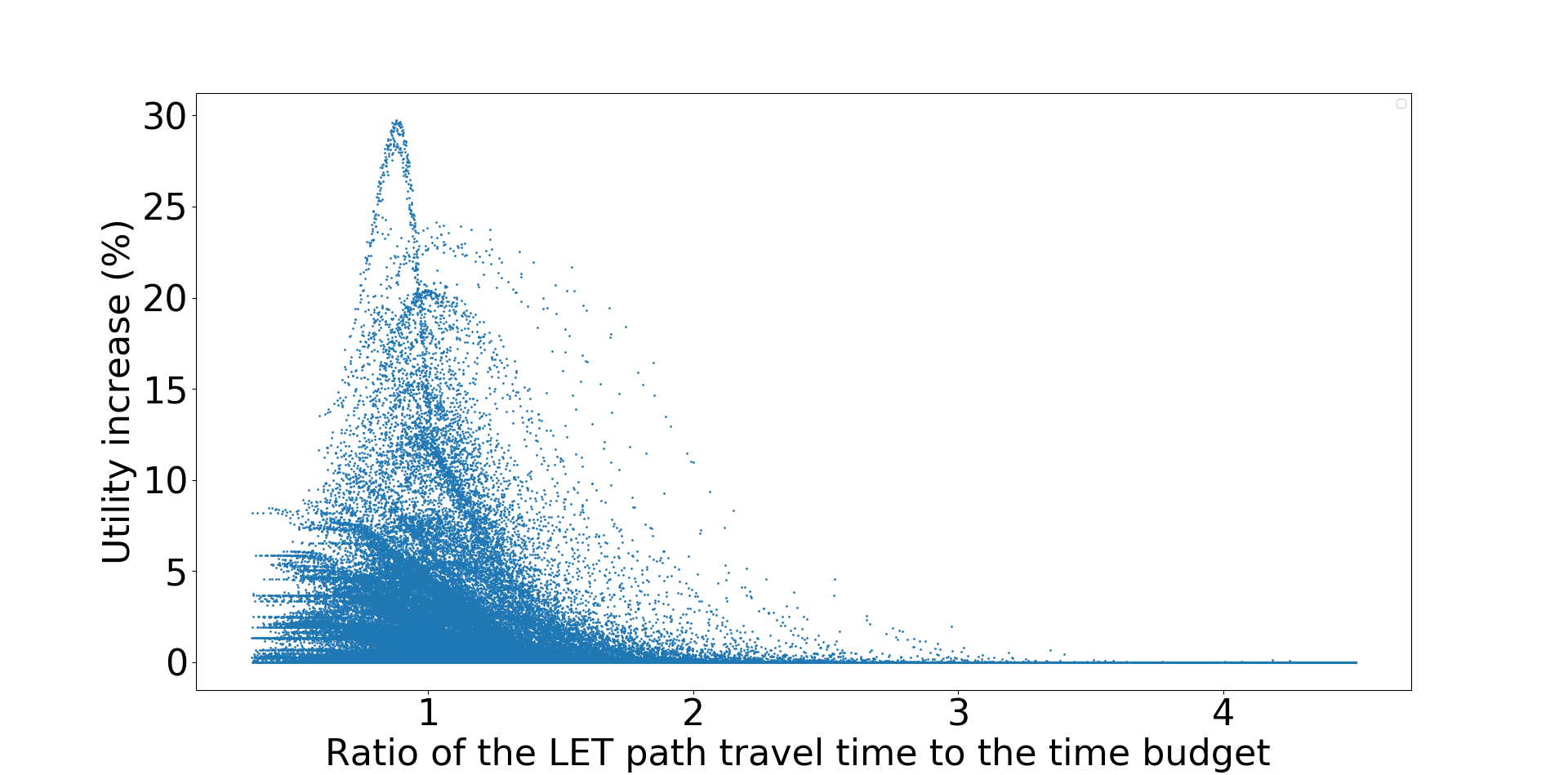}
\end{minipage}
\caption{The utility improvement of the SOTA policy over the LET path under different circumstances. \textit{Left:} With respect to the number of transit lines passing by the origin. \textit{Right:} With respect to the ratio of the travel time of the LET path and the time budget. }
\label{fig:n_start_stations}
\end{figure}

To give insights of what trips will likely benefit from the SOTA policy, we show the utility increases of the SOTA policy over the LET path on different circumstances with respect to: i) The number of transit lines passing by the origin; ii) The ratio of the least expected time (LET) path travel time to the time budget. As shown in Figure~\ref{fig:n_start_stations} (Left), a SOTA policy outperforms the LET path more when the number of transit lines at the origin increases, which demonstrates that a SOTA policy is more advantageous when there are more states and options to consider. In addition, even though there might be more transit lines to transfer on the way, it may not be beneficial to conduct a transfer in most of the cases due to the transfer time. Therefore, when there is only one transit line passing by the origin, the average utility difference is about 0.025, which is close to 0. The ratio of the LET path travel time to the time budget represents how sufficient the time budget is relative to the trip length. Figure~\ref{fig:n_start_stations} (Right) shows the utility increase versus the ratio. If the ratio is relatively large (larger than 2), passengers cannot increase the utility by using the adaptive routing policy since it is difficult to get to the destination in a very limited time budget; On the other hand, if the ratio is very small (smaller than 0.5), the utility increase is also close to 0 since passengers will likely reach the destination within the given time budget whatever transit line they board. The SOTA policy performs the best when the ratio is around 1. In summary, the SOTA policy in transit networks shines the most when there are more states to consider and when the time budget is neither too high or too low relative to the trip length. \par

\section{Conclusion.}\label{sec:conclusion}
In this article, we extend the SOTA formulation to the case of transit networks. A network representation comprised of three types of nodes is proposed to model the common line problem in the SOTA framework. Instead of assuming that the passenger will board the first arriving transit service in an attractive line set that is precomputed, we give a routing policy which outperforms a-priori solutions for all practical purposes. From the perspective of computation, we show that computing the utility in transit networks is significantly more difficult than in road networks, and design an dynamic programming based algorithm to solve the problem, which is pseudo-polynomial in the number of stations and time budget, and exponential in the number of transit lines at a station, which is practically a small number. \par

To reduce search space, we propose a definition of transit line dominance, and a series of techniques to check if a transit line dominates. Experiments are conducted on both a synthetic network and the Chicago transit network. The results show that the search space reduction techniques can reduce the computation time by up to around $90\%$, and the heuristic rules can further decrease the computation time by about $34\%$. Two sensitivity analyses on the travel time distribution show that the search space reduction techniques can significantly decrease the computation time in all the cases we test. Finally, the algorithms are applied in the Chicago transit network, and similar conclusions are obtained. \par

In future research, we will focus on designing proper preprocessing techniques to further reduce the computation time in practice. In reality, the spatial and temporal correlation of link travel time usually exists in transit networks. In addition, the travel time and waiting time can be time-varying at the different time of a day (e.g., at peak hours or off-peak hours). Therefore, one possible future research direction is to explore the ways to generate more realistic travel time and waiting time distributions and design algorithms to deal with them. \par
%% \appendix\

%% References
%%
%% Following citation commands can be used in the body text:
%% Usage of \cite is as follows:
%%   \cite{key}         ==>>  [#]
%%   \cite[chap. 2]{key} ==>> [#, chap. 2]
%%

%% References with bibTeX database:

%\bibliographystyle{elsarticle-num}
 \bibliographystyle{elsarticle-harv}

\bibliography{sample}

\end{document}